\newtheorem{theorem}{Theorem}[section]
\newtheorem{lemma}[theorem]{Lemma}
\newtheorem{corollary}[theorem]{Corollary}
\newtheorem{remark}[theorem]{Remark}
\numberwithin{equation}{section}
\begin{document}

\title{Logarithmic Quantile Estimation for Rank Statistics}

\author{Manfred Denker, Lucia Tabacu
       \\
        \\
				Mathematics Department, Department of Statistics\\
        Pennsylvania State University
        }
\date{December 10, 2012}

\maketitle
\begin{abstract} We prove an almost sure weak limit theorem for simple linear
        rank statistics for samples with continuous distributions functions. As a
        corollary the result extends to samples with ties, and the vector
        version of an a.s. central limit theorem for vectors of linear rank
        statistics. Moreover, we derive such a weak convergence result for some  quadratic
        forms.  These results are then applied
        to quantile estimation, and  to hypothesis testing for
        nonparametric statistical designs, here demonstrated by the c-sample
        problem, where the samples may be dependent. In general, the method is known to be
        comparable to 
        the bootstrap and other nonparametric  methods (\cite{THA, FRI}) and
        we  confirm  this finding for the c-sample problem. 
\end{abstract}  

\section{Almost sure central limit theorem for statistical functionals}\label{sec:1}
Almost sure central limit theorems (ASCLT) are based on a new type of averaging procedures for sums of  i.i.d. random variables to obtain their asymptotic distribution. Its application to statistical functionals  is a fairly new subject. 
It has been recently observed by \cite{THA} and \cite{FRI} that the logarithmic averages of data may be used for quantile estimation in practice. The philosophy behind this procedure resembles somehow to Efron's bootstrap method (\cite{EFR}) but resampling is not needed in the almost sure method. In this note we extend the applicability of the almost sure quantile estimation to rank statistics. We shall call it logarithmic quantile estimation (LQE).\\
We develop the method for general rank models as defined in \cite{BR1}, but
adapted to the needs of the almost sure concept. Let $X_1,...,X_N$ be a sample of random vectors and for each $n\le N$, let $T_n$ be a statistic based on $X_1,...,X_n$. The logarithmic average of the sequence $T_n$ has the form
\begin{equation*}
\widehat{G}_{N}(t)=\frac{1}{C_{N}}\sum_{n=1}^{N}\frac{1}{n}\mathbb I(T_n\le t),   
\end{equation*}
where $C_N$ is chosen to make $\widehat G_N$ an empirical distribution
function, and where $\mathbb I_C$ denotes the indicator function of the set $C$.
Note that $C_N\asymp \ln N$, which is responsible for the name log
averaging. In fact, the usage of $\widehat G_N$  is as for the classical
empirical distribution functions. We do not state the details like the
corresponding Glivenko-Cantelli theorem or the almost sure convergence of the
LQE-quantiles. These results are easy to prove and left as an exercise.
Then the empirical $\alpha-$quantile of $\widehat{G}_{N}$ can be used in
hypothesis testing, for example a typical rejection region may look like
$\{X\in \mathbb{R}^{dN}: |T_N| \ge z_{\alpha} \}$ with
$\widehat{G}(z_{\alpha})=\alpha$. The details are as well left as an exercise.\\
Let us give a brief overview of some important results on almost sure central limit theorems, which is the base for the validity of good test procedures. The concept of almost sure central limit theorem has its origins in the work of \cite{BRO}, \cite{SCH} and \cite{LAC}. Since then many important results have been obtained for independent and dependent random variables, for random vectors, stochastic processes as well as their almost sure functional versions.\\
The simplest form of the almost sure central limit theorem is 
\begin{equation*}
\lim_{n\rightarrow \infty}\frac{1}{\ln n}\sum_{k=1}^{n}\frac{1}{k}\mathbb I(\frac{S_{k}(\omega)}{\sqrt{k}}<x)=\Phi(x) \text{ for almost all }  \omega \in \Omega,
\end{equation*}
and all numbers $x$, where $X_1(\omega), X_2(\omega),...$ are independent and identically distributed random variables defined on a probability space $(\Omega, \mathcal{B},P )$ with $E(X_1)=0, E(X_1^{2})=1$, $S_k(\omega)=X_1(\omega)+...+X_k(\omega)$, $I$ is the indicator function and $\Phi$ is the standard normal distribution function.\\  
\cite{BER} obtained general results that extend weak limit theorems for independent random variables to their almost sure versions. Applications of their results are the almost sure central limit theorems for partial sums, maxima of partial sums, extremes, empirical distribution functions, U-statistics (see also \cite{HOL}), Darling-Erd\"os type limit theorems. \cite{BED} proved a general almost sure central limit theorem for independent, not identically distributed random variables. \cite{PEL} considered the almost sure central limit theorem for weakly dependent random variables, \cite{LI1} obtained the almost sure limit theorem for sums of random vectors, \cite{CHU} treated the case of Pearson statistic.\\    
\cite{THA} proved an almost sure central limit theorem for the two-sample
linear rank statistics and developed a hypothesis testing procedure based on
the almost sure convergence. He applies his method to problems like testing
for the mean in the parametric one-sample problem, testing for the equality of
the means in the parametric two-sample problem, and for the nonparametric
Behrens-Fisher problem. As a result he showed that the LQE method is better
than bootstrap and almost as good as the t-test for the two sample
problem. For the nonparametric two sample Behrens-Fisher problem he compared  the LQE
method with the methods in \cite{BAB}, \cite{REI} and \cite{BR4}. It is shown
that the LQE-method performs stably over various distributions, is comparable
to the other methods and often preferable. Later, \cite{DEF} obtained the
almost sure version of Cram\'er's theorem.  Using this result, \cite{FRI}
showed the almost sure central limit theorem for the population correlation
and applied the almost sure version of Cram\'er's theorem to obtain confidence
intervals for the population correlation coefficient. It turns out that the
LQE method is superior to bootstrap for this statistics. \\
All these results show that the LQE method has to be developed further, in
particular, for nonparametric designs when bootstrap methods are hardly
possible to apply. \cite{STE} has obtained a bootstrap method for simple
rank statistics using the von Mises method, and assuming complete independence. Here we are interested in the general result, when samples are not
identically distributed and hence resampling becomes doubtful. The LQE method
does not have this restriction, and we show that it provides good results. As a special design we
chose the c-sample problem when the samples are independent or not: we show
that the test based on the LQE method provides better coverage probability
than the classical Kruskal-Wallis test. We also show that in the dependent
situation the LQE test has a satisfying performance. For other designs we got similar
results; this will be published elsewhere. Besides this advantage, the
LQE-method estimates quantiles directly from the data, not using the
asymptotic distribution, hence it also does not use any estimation of unknown
variances or covariances or eigenvalues of covariance matrices. It is also
applicable when asymptotic covariance matrices become degenerate.  
\\
The article is organized as follows. In Section \ref{sec:2} we introduce the
general model for simple linear rank statistics and state our result on the
almost sure central limit theorem for simple linear rank statistics. In
Section \ref{sec:3} we treat the c-sample problem when the samples are
independent as well as dependent. The interesting point here is that we derive
an almost sure central limit theorem for Kruskal-Wallis statistic even in the
case when the single observations have non-independent coordinates. To our
knowledge no test treats this general case without any further assumption.  This is then applied to hypothesis testing. Section \ref{sec:4} contains the results of our simulation study. We compute the empirical logarithmic quantiles for three independent samples when the Kruskal-Wallis statistic is used and we calculate the power and type I error for three dimensional vectors with dependent coordinates. In Section \ref{sec:5} we provide proofs of all the auxiliary lemmas that we used to prove the main theorem.

\section{Almost sure central limit theorem for rank statistics}\label{sec:2}

We begin this section stating the model assumptions. Since they are different
from the standard literature as  in \cite{BR1} and subsequently in \cite{BR2},
\cite{BR3}, \cite{AK1}, and  \cite{AK2} to name a few, we need to state the
notation in as much as it differs from those references. Note that in the general model for each $n$, an array of independent random vectors $\mathbf{X}_{i}(n)=(X_{i1}(n),...,X_{im_{i}(n)}(n))$, with $i=1, 2,...,n$ and $n\in \mathbb{N}$ was defined on, may be, different probability spaces.\\
Since we are heading for an almost sure type result we need to consider the
model on a common probability space. That is why our model requires a sequence of independent random vectors $\mathbf{X}_{i}=(X_{i1},...,X_{im_{i}})$, $i=1, 2,...$ with continuous marginal distributions
\begin{equation*}
F_{ij}(x)=P(X_{ij}\leq x), x\in \mathbb{R}, j=1,...,m_i.
\end{equation*}
Note that in \cite{MUN} this condition of continuity was shown to be
unnecessary. Likewise the theorem below holds as well when ties are present,
as it is well known that one can replace ranks by midranks. For simplicity, we keep the commonly used assumption of having no ties. \\
Note that we allow dependence of the coordinates of the random vectors, each
vector may have a different dependence structure. This general approach
excludes the use of the bootstrap method.  As it is well known (see \cite{BR1}
  and subsequently \cite{BR2}, for example) this relaxation of the classical assumptions can be used for a large class of designs, for example repeated measure designs or time series observations. In order to set up the notation for rank statistics we introduce the following notations:\\ 
For $n\ge 1$, let $N(n)=\sum_{i=1}^{n}m_{i}$ denote the number of observations involved in the vectors $\mathbf{X}_1,...,\mathbf{X}_n$
and let $\lambda_{ij}^{(n)} \;(1\leq j\leq m_{i}, i\ge 1)$ be (known) regression constants which are assumed to satisfy 
\begin{equation}\label{eq:3.2}
\sideset{}{}\max_{1\le i\le n,\;1\le j\le m_{i}}|\lambda_{ij}^{(n)}|=1
\end{equation}
Define 
\begin{align}\label{eq:3.3}
H^{(i)}(x) &=\sum_{j=1}^{m_{i}}F_{ij}(x),&
\widehat{H}^{(i)}(x) &=\sum_{j=1}^{m_{i}}\mathbb I(X_{ij}\leq x)\\
\label{eq:3.4} F^{(i,n)}(x) &=\sum_{j=1}^{m_{i}}\lambda_{ij}^{(n)} F_{ij}(x),&
\widehat{F}^{(i,n)}(x) &=\sum_{j=1}^{m_{i}}\lambda_{ij}^{(n)}\mathbb I(X_{ij}\leq x)\\
\label{eq:3.5} H_{n}(x)&=\frac{1}{N(n)}\sum_{i=1}^{n}H^{(i)}(x),&
\widehat{H}_{n}(x)&=\frac{1}{N(n)}\sum_{i=1}^{n}\widehat{H}^{(i)}(x)\\
\label{eq:3.6} F_{n}(x)&=\frac{1}{N(n)}\sum_{i=1}^{n}F^{(i,n)}(x),&
\widehat{F}_{n}(x)&=\frac{1}{N(n)}\sum_{i=1}^{n}\widehat{F}^{(i,n)}(x).
\end{align}
\\
The simple linear rank statistic that we are interested in is defined by
\begin{equation}\label{eq:3.7}
L_{n}(J)=\int_{-\infty}^{\infty}J\left(\frac{N(n)}{N(n)+1}\widehat{H}_{n}\right)d\widehat{F}_{n}=
\frac{1}{N(n)}\sum_{i=1}^{n}\sum_{j=1}^{m_{i}}\lambda_{ij}^{(n)}J\left(\frac{R_{ij}(n)}{N(n)+1}\right),
\end{equation}
where $R_{ij}(n)$ denotes the rank of $X_{ij}$ among all random variables $\{X_{kl}: 1\leq k\leq n, 1\leq l\leq m_{k}\}$
and $J:(0, 1)\rightarrow \mathbb{R}$ denotes an (absolutely continuous) score function. Let
\begin{equation}\label{eq:3.8}
T_{n}(J)=L_{n}(J)-\int_{-\infty}^{\infty}J(H_{n})dF_{n},
\end{equation}
\begin{equation}\label{eq:3.9}
s_{n}^{2}(J)=N(n)^{2}E(T_{n}(J)^{2}),
\end{equation}
\begin{equation}\label{eq:3.10}
B_{n}(J)=\int_{-\infty}^{\infty} J(H_{n})d(\widehat{F}_{n}-F_{n})+\int_{-\infty}^{\infty} J'(H_{n})(\widehat{H}_{n}-H_{n})dF_{n},\\
\end{equation}
\begin{equation}\label{eq:3.11}
\sigma_{n}^{2}(J)=N(n)^{2}\text{Var}(B_{n}(J)).
\end{equation}
The asymptotic normality of the linear rank statistics for independent random vectors with varying dimension introduced above was proved in \cite{BR1} (Theorem 3.1).\\
The main result of this note is an almost sure central limit theorem for the statistics defined in (\ref{eq:3.8}).\\
\\
\begin{theorem}\label{theo:3.1} Let $J:(0,1)\rightarrow \mathbb{R}$ be a twice differentiable score function with bounded second derivative and let $\lambda_{ij}^{(n)}$ be regression constants satisfying (\ref{eq:3.2}). Then the rank statistics (\ref{eq:3.8}) satisfies the almost sure central limit theorem, that is 
\begin{equation*}
\lim _{N\rightarrow \infty} \frac{1}{\ln N}\sum_{n=1}^{N}\frac{1}{n}\mathbb I(\frac{N(n)}{\sigma_{n}(J)}T_{n}(J)\le t)=\Phi(t)
\end{equation*}
provided\\
(a) $\sigma_n(J)$ defined in (\ref{eq:3.11}) satisfies, for some $M>0, \gamma >0$\\ 
\begin{equation}\label{eq:3.13}
\frac{\sigma_{m}(J)}{\sigma_{n}(J)}\ge M\left(\frac{m}{n}\right)^{\gamma}, \text{ for } m\ge n  
\end{equation}
(b) For $n_{k}=\min\{j: N(j)\ge k^{2}\}$ one has that 
\begin{equation}\label{eq:3.14}
\sum_{k=1}^{\infty}\left(\frac{\max_{1\leq i\leq n_{k}}m_{i}}{\sigma_{n_{k}}(J)}\right)^{2}<\infty,
\end{equation}
\begin{equation}\label{eq:3.15}
(\log n_k)\frac{\max_{1\leq i\leq n_{k}} m_{i}}{\sigma_{n_{k}}(J)}\to 0,
\end{equation}
and for $n_{k}\le j < n_{k+1}$ and a constant $K$,
\begin{equation}\label{eq:3.16}
\max_{1\le \lambda\le j}m_{\lambda}\le K\max_{1\le i\le n_{k}}m_{i}.
\end{equation}
\end{theorem}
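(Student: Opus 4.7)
The strategy is to reduce the almost sure statement to a covariance estimate between indicators of the normalized statistic at different times $m$ and $n$, and then invoke the Berkes--Cs\'aki transfer principle (\cite{BER}) together with the distributional CLT from \cite{BR1}. Since \cite{BR1} Theorem 3.1 already supplies weak convergence of $Z_n:=N(n)T_n(J)/\sigma_n(J)$ to $\Phi$, only the a.s.\ upgrade remains. The general criterion requires, for $m\le n$ and bounded measurable $f,g$, a bound of the form
$$
\bigl|\mathrm{Cov}\bigl(f(Z_m),g(Z_n)\bigr)\bigr|\le C\Bigl(\frac{m}{n}\Bigr)^{\beta}
$$
for some $\beta>0$; combined with a discretization along a suitable subsequence, this is the standard route to the ASCLT.

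First I would linearize $T_n(J)$. Since $J$ is twice differentiable with bounded second derivative, Taylor expansion of $J(\widehat H_n)$ about $H_n$, together with the integration by parts implicit in the definition of $L_n(J)$, produces
$$
T_n(J)=B_n(J)+R_n(J),
$$
with $B_n(J)$ as in (\ref{eq:3.10}) and a remainder $R_n(J)$ dominated in $L^2$ by $\tfrac12\|J''\|_\infty\int(\widehat H_n-H_n)^2\,dF_n$ plus a boundary-type term. Standard moment bounds on the increments of $\widehat H_n-H_n$, together with (\ref{eq:3.14}) and hypothesis (a), yield that $N(n)R_n(J)/\sigma_n(J)$ tends to zero in a sense strong enough for the log-average, so one may replace $Z_n$ throughout by $N(n)B_n(J)/\sigma_n(J)$.

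Next I would exploit independence of the vectors $\mathbf{X}_i$ to decompose
$$
B_n(J)=\frac{1}{N(n)}\sum_{i=1}^{n}\xi_i^{(n)},
$$
where $\xi_i^{(n)}$ is a functional of $\mathbf{X}_i$ alone and satisfies $|\xi_i^{(n)}|\le C\,m_i$. For $m\le n$, split $B_n=B_n^{\le m}+B_n^{>m}$; since $B_n^{>m}$ is independent of $Z_m$, the covariance is governed by the $L^2$-mass of $B_n^{\le m}$. An easy variance computation gives $N(n)^{2}\mathrm{Var}(B_n^{\le m})=\sum_{i\le m}\mathrm{Var}(\xi_i^{(n)})\le C\sum_{i\le m}m_i^{2}$, which, on comparison with the same-type expression for $\sigma_m(J)^2$, is of order $\sigma_m(J)^{2}$; dividing by $\sigma_n(J)^2$ and applying (\ref{eq:3.13}) produces $\lesssim (n/m)^{-2\gamma}$, and Cauchy--Schwarz gives the desired polynomial covariance bound with $\beta=\gamma$.

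With the covariance bound in hand, the Berkes--Cs\'aki scheme yields the ASCLT along the subsequence $n_k=\min\{j:N(j)\ge k^{2}\}$: the choice $N(n_k)\asymp k^{2}$ makes the Kronecker-type series on the grid summable (Borel--Cantelli), while (\ref{eq:3.14}) and (\ref{eq:3.15}) supply the Lindeberg and uniform-tightness ingredient along $n_k$. Interpolation between $n_k$ and $n_{k+1}$ is handled by (\ref{eq:3.16}), which bounds the block-wise maximal fluctuation of the partial rank sums so that consecutive logarithmic averages differ by $o(1)$, propagating the ASCLT to all $n$. The main obstacle is the combined approximation/covariance step: one has to handle simultaneously the Taylor remainder, the tail contribution from the first $m$ vectors, and the normalization mismatch $\sigma_m(J)/\sigma_n(J)$, weaving the four quantitative assumptions (\ref{eq:3.13})--(\ref{eq:3.16}) together into a single summable bound compatible with the Berkes--Cs\'aki framework.
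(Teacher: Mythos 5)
Your outline follows essentially the same route as the paper: Taylor-expand $J$ around $H_n$ to write $T_n(J)=B_n(J)+{}$remainder, kill the remainder almost surely by Chebyshev/Borel--Cantelli along the subsequence $n_k$ with $N(n_k)\asymp k^2$ and interpolate between $n_k$ and $n_{k+1}$ using (\ref{eq:3.16}), and then transfer the distributional CLT of \cite{BR1} to an ASCLT for the independent-sum representation $N(n)B_n(J)=\sum_{i\le n}\xi_i^{(n)}$ via Berkes--Cs\'aki, with hypothesis (a) supplying the required increment/correlation bound. Two remarks on where your sketch and the paper diverge in substance. First, the paper applies the Berkes--Cs\'aki criterion to $B_n$ directly for \emph{all} $n$ (the subsequence and conditions (\ref{eq:3.14})--(\ref{eq:3.16}) are consumed entirely by the remainder terms), whereas you route the ASCLT itself through the grid $n_k$; both organizations work, but yours makes the interpolation step carry more weight than it needs to.

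The one genuine under-specification is your treatment of the quadratic Taylor remainder $C_3(n)=\tfrac12\int J''(\theta)\,(\tfrac{N(n)}{N(n)+1}\widehat H_n-H_n)^2\,d\widehat F_n$ by ``standard moment bounds.'' A first-moment Chebyshev bound here yields only $P(\tfrac{N(n_k)}{\sigma_{n_k}}|C_3(n_k)|>\epsilon)=O(\max_{i\le n_k}m_i/\sigma_{n_k})$, and (\ref{eq:3.14}) makes the \emph{squares} of these quantities summable, not the first powers, so Borel--Cantelli does not close. The paper instead invokes Singh's weighted Glivenko--Cantelli theorem to get the almost-sure uniform rate $\sup_t|\widehat H_n(t)-H_n(t)|=O\bigl(\sqrt{\log n}\,\sqrt{\textstyle\sum_{k\le n}m_k^2}/N(n)\bigr)$, and this is precisely where hypothesis (\ref{eq:3.15}) is consumed --- not in a Lindeberg or tightness role, as you assign it (boundedness of the $\xi_i^{(n)}$ already gives Lindeberg for free). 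You would either need to add this uniform empirical-process bound, or upgrade to a fourth-moment estimate on $\widehat H_n-H_n$ to recover a square-summable tail along $n_k$. A smaller point: your variance comparison identifies $\sum_{i\le m}\mathrm{Var}(\xi_i^{(n)})$ with $\sigma_m(J)^2$, but the summands depend on $n$ through $H_n$, $N(n)$ and $\lambda^{(n)}$; the paper's own Lemma \ref{lem:6.6} is equally terse here, so this is not a gap relative to the paper, but it deserves a sentence in a full write-up.
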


\begin{remark}\label{rem:3.2} {\rm Note that the essential assumptions in the theorem are (a) and (b). (a) is a condition on the growth of the variances of the asymptotically equivalent statistics $B_n$ (see \cite{BR1} for details). This condition is easily verified in many examples, e.g. if $\sigma_{n}^{2}(J)=O(n)$. (b) is in fact a condition on the maximal allowable dimensions of the vectors $X_i$. If all $m_i$=1 then condition (b) is trivially satisfied, the same is true if $\max_{i\ge 1}m_i <\infty.$}
\end{remark}

\begin{proof}
The proof of the theorem follows from a standard decomposition (\cite{BR1}):  Taylor expansion of J around $H_{n}(t)$ and integration by parts yields
\begin{equation*}
\frac{N(n)}{\sigma_{n}(J)}T_{n}(J)=\frac{N(n)}{\sigma_{n}(J)}B_{n}(J)+\frac{N(n)}{\sigma_{n}(J)}C_{1}(n)-\frac{N(n)}{\sigma_{n}(J)}C_{2}(n)+
\frac{N(n)}{\sigma_{n}(J)}C_{3}(n),
\end{equation*}
where
\begin{align}\label{eq:3.18}
     C_{1}(n)&=\int_{-\infty}^{\infty} J'(H_{n})(\widehat{H}_{n}-H_{n})d(\widehat{F}_{n}-F_{n}), &\\
 \label{eq:3.19}    C_{2}(n)&=\frac{1}{N(n)+1}\int_{-\infty}^{\infty} J'(H_{n})\widehat{H}_{n}d\widehat{F}_{n}, &\\
 \label{eq:3.20}    C_{3}(n)&=\frac{1}{2}\int_{-\infty}^{\infty} J''(\theta(H_{n}))\left(\frac{N(n)}{N(n)+1}\widehat{H}_{n}-H_{n}\right)^{2}d\widehat{F}_{n},
\end{align}
and $\theta(H_{n})\in [H_{n}, \frac{N(n)}{N(n)+1}\widehat{H}_{n}]\cup[\frac{N(n)}{N(n)+1}\widehat{H}_{n}, H_{n}]$.\\
\\
By Lemmas \ref{lem:6.2}, \ref{lem:6.3}, \ref{lem:6.5} it follows that
\begin{equation}\label{eq:3.21}
\frac{N(n)}{\sigma_{n}(J)}C_{1}(n)-\frac{N(n)}{\sigma_{n}(J)}C_{2}(n)+\frac{N(n)}{\sigma_{n}(J)}C_{3}(n)\rightarrow 0 \text{ a.s.  when }n\rightarrow \infty. 
\end{equation}
By Lemma 2.2 in \cite{FRI}, Lemma \ref{lem:6.6} and (\ref{eq:3.21}), we obtain the almost sure central limit theorem for the statistics $T_n(J)$
\begin{equation*}
\lim _{n\rightarrow \infty}\frac{1}{\ln n}\sum_{k=1}^{n}\frac{1}{k}\mathbb I(\frac{N(k)}{\sigma_{k}(J)}T_{k}(J)\le t)=\Phi(t).
\end{equation*}
\end{proof}

\noindent The next corollary is a form of the theorem which can be used for hypothesis testing. \\
\\
\begin{corollary}\label{cor:3.3} Assume that in Theorem \ref{theo:3.1}, 
\begin{equation*}
\sigma_{n}^2(J)=a_n^2\sigma^2+o(a_n)   
\end{equation*}
where $\sigma^2>0$ and $a_n$ satisfies (\ref{eq:3.13}) when replacing
$\sigma_{n}(J)$ by $a_n$. \\
Then under (b), the statistics $\frac{N(n)}{a_n}T_{n}(J)$ satisfies the central limit theorem and the almost sure central limit theorem, that is for $t \in \mathbb{R}$
\begin{equation*}
\lim_{n\rightarrow \infty} P(\frac{N(n)}{a_n}T_{n}(J)\le t)=\frac{1}{\sqrt{2\pi \sigma^2}}\int_{-\infty}^{t} e^{-\frac{u^2}{2\sigma^2}}du
\end{equation*}
and 
\begin{equation*}
\lim_{n\rightarrow \infty} \frac{1}{\ln n}\sum_{k=1}^n \frac{1}{k} \mathbb I(\frac{N(k)}{a_k}T_{k}(J)\le t)=\frac{1}{\sqrt{2\pi \sigma^2}}\int_{-\infty}^{t} e^{-\frac{u^2}{2\sigma^2}}du \text{ a.s. }
\end{equation*}
\end{corollary}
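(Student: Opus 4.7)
The plan is to reduce both assertions to Theorem~\ref{theo:3.1} together with the Brunner--Denker CLT in \cite{BR1}, through a Slutsky-type argument driven by the asymptotic equivalence $\sigma_n(J)\sim a_n\sigma$.

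First I verify that the hypotheses of Theorem~\ref{theo:3.1} hold with $\sigma_n(J)$ itself as the scaling sequence. Since $a_n$ satisfies (\ref{eq:3.13}) with some $\gamma>0$, in particular $a_n\to\infty$; the assumption $\sigma_n^2(J)=a_n^2\sigma^2+o(a_n)$ therefore yields $c_n:=\sigma_n(J)/a_n\to\sigma>0$. For all large enough $n$ this gives $\tfrac{\sigma}{2}a_n\le\sigma_n(J)\le\tfrac{3\sigma}{2}a_n$, so that
\[
\frac{\sigma_m(J)}{\sigma_n(J)}\;\ge\;\frac{1}{3}\,\frac{a_m}{a_n}\;\ge\;\frac{M}{3}\Bigl(\frac{m}{n}\Bigr)^{\gamma}
\]
whenever $m\ge n$ are large. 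Hence (\ref{eq:3.13}) holds for $\sigma_n(J)$ with constant $M/3$, while (b) is assumed outright, so that Theorem~\ref{theo:3.1} applies to $\xi_n:=N(n)T_n(J)/\sigma_n(J)$ and delivers its ASCLT with limit $\Phi$.

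For the ordinary CLT, the decomposition in the proof of Theorem~\ref{theo:3.1} shows that $\xi_n$ is asymptotically equivalent in probability to $N(n)B_n(J)/\sigma_n(J)$, whose weak convergence to $\mathcal N(0,1)$ is Theorem~3.1 of \cite{BR1}. Writing $N(n)T_n(J)/a_n=c_n\xi_n$ with $c_n\to\sigma$, Slutsky's theorem then gives $N(n)T_n(J)/a_n\Rightarrow\mathcal N(0,\sigma^2)$.

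For the ASCLT, fix $t\in\mathbb{R}$ and $\varepsilon\in(0,\sigma)$. For $n$ sufficiently large $c_n\in(\sigma-\varepsilon,\sigma+\varepsilon)$ and $c_n>0$, so $\{c_n\xi_n\le t\}=\{\xi_n\le t/c_n\}$ with $t/c_n$ lying in the interval with endpoints $t/(\sigma\pm\varepsilon)$. Thus $\mathbb I(c_n\xi_n\le t)$ is sandwiched between $\mathbb I(\xi_n\le t/(\sigma+\varepsilon))$ and $\mathbb I(\xi_n\le t/(\sigma-\varepsilon))$, the inclusions going one way or the other according to the sign of $t$. Forming the logarithmic average $\frac{1}{\ln N}\sum_{n=1}^N\frac{1}{n}$ of this two-sided bound and applying the ASCLT for $\xi_n$ at the deterministic thresholds $t/(\sigma\pm\varepsilon)$ shows that $\liminf_N$ and $\limsup_N$ of $\frac{1}{\ln N}\sum_{n=1}^N\frac{1}{n}\mathbb I(c_n\xi_n\le t)$ lie almost surely between $\Phi(t/(\sigma+\varepsilon))$ and $\Phi(t/(\sigma-\varepsilon))$. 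Letting $\varepsilon\downarrow 0$ and using continuity of $\Phi$ at $t/\sigma$ identifies the common limit as $\Phi(t/\sigma)$, which is the distribution function of $\mathcal N(0,\sigma^2)$. The only mildly delicate point is the transfer of hypothesis (a) from $a_n$ to $\sigma_n(J)$ in the first step; the remainder is a routine Slutsky-style reduction, carried out once for weak and once for logarithmic-average convergence.
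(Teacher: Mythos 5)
Your proof is correct and follows the same route as the paper, which simply cites \cite{BR1} (Theorem 3.1) for the weak convergence and declares the ASCLT a special case of Theorem~\ref{theo:3.1}; you have merely written out the details that the paper leaves implicit, namely the transfer of condition (\ref{eq:3.13}) from $a_n$ to $\sigma_n(J)$ via $c_n=\sigma_n(J)/a_n\to\sigma$ and the Slutsky/sandwich argument converting the $\Phi$-limit for $N(n)T_n(J)/\sigma_n(J)$ into the $\mathcal N(0,\sigma^2)$-limit for $N(n)T_n(J)/a_n$. These details are sound (the finitely many indices where $c_n$ leaves $(\sigma-\varepsilon,\sigma+\varepsilon)$ are washed out by the logarithmic averaging, and a countable sequence $\varepsilon_k\downarrow0$ handles the intersection of almost sure events), so nothing further is needed.
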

\begin{proof} See \cite{BR1} for the first part. The second part is a special case of Theorem \ref{theo:3.1}.
\end{proof}
\noindent The next remark states the properties under which  hypothesis
testing for $L_n(J)$ is possible.\\
\\
\begin{remark}\label{rem:3.4} 
{\rm Let $H_{0}: \frac{N(n)}{\sigma_{n}(J)}\left(\int J(H_n)dF_{n} -
c\right)\to 0$ as $n\to 0$ for some constant c. Then under the null hypothesis
$Q_{n}(J):=\frac{N(n)}{\sigma_{n}(J)}(L_{n}(J)-c)$ is asymptotically normal
and satisfies the almost sure central limit theorem.\\
Let $\hat t_{\gamma}$ denote the empirical $\gamma$-quantile
of the empirical distribution function $\widehat G_N$. Let $\alpha>0$.\\
Then, under the null hypothesis,  
\begin{equation*}
I_{\alpha}^{(N)}=\left[Q_N(J)-\hat{t}_{1-\alpha}^{(N)}, Q_N(J)-\hat{t}_{\alpha}^{(N)} \right]
\end{equation*}
is a random interval with the property that
\begin{equation*}
P(0\in I_{\alpha}^{(N)})\rightarrow 1-2\alpha.
\end{equation*}
If $H_1: \frac{N(n)}{\sigma_{n}(J)}(\int J(H_n)dF_{n}- d) \to 0$  for $d\neq
c$ and if $\frac{N(n)}{\sigma_{n}(J)}\to \infty$, then $|Q_{n}(J)|\to \infty$
a.s.}
It follows that under these conditions the power of the test approaches 1
under the alternative. 

The proof of this fact is left as an exercise. 
\end{remark}

\section{Example: the dependent $c$-sample problem}\label{sec:3}
In this section we consider the problem of testing the equality of the
distributions of $c$ samples that may be dependent. This problem is chosen to
show how  Theorem \ref{theo:3.1} is used to derive LQE-quantiles for quadratic
forms based on vectors of ranks statistics of the form (\ref{eq:3.8}). 
In order to do this we derive the vector form of Theorem \ref{theo:3.1} and
conclude the a.s. convergence of the quadratic form. Note that the
result on distributional convergence is well known, hence we can proceed as in
Remark \ref{rem:3.4} to derive the test. Note that the great advantage of the
LQE-method is that the form of the limiting distribution need not to be known
(like for bootstrap), so there is no need to estimate covariances of unknown
limiting distributions. This makes many problems easier and accessible under
no further assumption on covariances. It also covers cases of degenerate
covariance matrices.

The test statistic that we are using here is the classical Kruskal-Wallis
statistic. In the case of dependent samples the distribution of the test
statistic  and its asymptotic distribution under the null hypothesis are not
known and a statistical decision is not possible to derive from this. In order
to apply the logarithmic quantile estimation, we first need to show that the
Kruskal-Wallis statistic is a particular statistics derived from the simple
linear rank statistic $T_{n}(J)$ as defined in (\ref{eq:3.8}) and that it satisfies an almost sure central limit theorem.
Let $\mathbf{X}_{i}=(X_{i1},...,X_{ic})$ ($1\leq i\leq n$) be independent
random vectors such that the vectors $(X_{1k}, X_{2k},...,X_{nk})'$, for
$k=1,2,...,c$, are possibly dependent random variables  with continuous marginal distribution functions $F_{k}(t)=P(X_{ik} \leq t)$ for $k=1,...,c$. We use the Wilcoxon scores $J(t)=t$ so that $J'(t)=1$ and $\sigma_{n}=\sqrt{n}$. Then with the additional fixed index $l$ that corresponds to the $l$-th sample, definitions (\ref{eq:3.3})-(\ref{eq:3.10}) can be written as
\begin{equation*}
\lambda_{ij}^{(l)}=\begin{cases} 1, j=l\\
                                 0, j\neq l   
																\end{cases}
\end{equation*}  
where $1\le i\le n, 1\le j\le c$,
\begin{eqnarray*}
H_{n}(t) &=&\frac{1}{N(n)}\sum_{i=1}^{n}\sum_{k=1}^{c}F_{k}(t)=\frac{1}{c}\sum_{k=1}^{c}F_{k}(t), 
\hat{H}_{n}(t) =\frac{1}{N(n)}\sum_{i=1}^{n}\sum_{k=1}^{c}\mathbb I(X_{ik}\le t)\\
F_{n}^{(l)}(t) &=&\frac{1}{N(n)}\sum_{i=1}^{n}F_{l}(t)=\frac{1}{c}F_{l}(t),\qquad
\hat{F}_{n}^{(l)}(t) =\frac{1}{N(n)}\sum_{i=1}^{n}\mathbb I(X_{il}\le t)
\end{eqnarray*}
and
\begin{equation*}
T_{n}^{(l)}=L_{n}^{(l)}-\int{H_{n}(t)dF_{n}^{(l)}}=\frac{1}{N(n)(N(n)+1)}\sum_{i=1}^{n}R_{il}-\frac{1}{c^{2}}\sum_{j=1}^{c}\int{F_{j}(t)dF_{l}(t)}
\end{equation*}
\begin{eqnarray*}
B_{n}^{(l)}& = &\int{H_{n}(t)}d(\hat{F}_{n}^{(l)}-F_{n}^{(l)})(t)+\int{(\hat{H}_{n}(t)-H_{n}(t))dF_{n}^{(l)}(t)}=\\
&
& \frac{1}{N(n)c}\sum_{i=1}^{n}\sum_{j=1}^{c}F_{j}(X_{il})+\frac{1}{N(n)c}\sum_{i=1}^{n}\sum_{j=1}^{c}\int{\mathbb
I(X_{ij}\le t)dF_{l}(t)}\\
& & -\frac{2}{c^2}\sum_{j=1}^{c}\int{F_{j}(t)dF_{l}(t)}=\frac{1}{N(n)c}\sum_{i=1}^{n}\left(\sum_{j=1}^{c}F_{j}(X_{il})+ \right.\\
& & \left.+\sum_{j=1}^{c}\int{\mathbb I(X_{ij}\le t)dF_{l}(t)}-2\sum_{j=1}^{c}\int{F_{j}(t)dF_{l}(t)}\right).
\end{eqnarray*}
\\
Define the independent c-dimensional vectors 
\begin{eqnarray}\label{eq:4.6}
&&\xi_{i}=(\xi_{ik})_{1\le k\le c}\notag \\
& &=\left(\sum_{j=1}^{c}F_{j}(X_{ik})+\sum_{j=1}^{c}\int{\mathbb I(X_{ij}\le t)dF_{k}(t)}-2\sum_{j=1}^{c}\int{F_{j}(t)dF_{k}(t)}\right)_{1\le k\le c} 
\end{eqnarray}
and obtain
\begin{equation}\label{eq:4.7}
\frac{1}{\sigma_n}\sum_{i=1}^{n}\xi_{i}=\left(\frac{N(n)cB_{n}^{(1)}}{\sigma_n},...,\frac{N(n)cB_{n}^{(c)}}{\sigma_n}\right).
\end{equation}
In the following we shall obtain an almost sure central limit theorem for the
vector
$(\frac{N(n)c}{\sigma_n}T_{n}^{(1)},...,\frac{N(n)c}{\sigma_n}T_{n}^{(c)})$
that holds under the null and alternative hypothesis. Under the null
hypothesis we then show that the Kruskal-Wallis statistic is written as a function of $T_{n}^{(1)},...,T_{n}^{(c)}$. 
In order to show the almost sure central limit theorem for the vectors
$(\frac{N(n)c}{\sigma_n}T_{n}^{(1)},...,\frac{N(n)c}{\sigma_n}T_{n}^{(c)})$ it
is sufficient to show the almost sure central limit theorem for the vectors in
(\ref{eq:4.6}) and the statistics in (\ref{eq:4.7}) since  we may argue as in
the proof of Theorem \ref{theo:3.1}. In order to show the almost sure central
limit theorem  we need to check the assumptions in Lifschits' Theorem 3.1 (\cite{LI2}).\\
The first assumption in \cite{LI2} is to have a distributional convergence for the vectors $\xi_i$. We need to assume that the vectors $\xi_{i}$'s have a finite covariance matrix $\Sigma_{i}, i\ge 1$ such that
\begin{equation}\label{eq:4.8} 
\frac{\Sigma_{1}+...+\Sigma_{n}}{n}\rightarrow \Sigma \text{ as } n\rightarrow \infty,
\end{equation}
where $\Sigma$ is a $c \times c$ matrix. Note that this is essentially a condition on the dependencies of the coordinates of the independent random vectors. Also note that $E(\xi_{i})=0$ and since all coordinates of the random vectors $\xi_{i}$ are bounded, they satisfy the Lindeberg condition. 
Now, by the multivariate central limit theorem it follows that  
\begin{equation*}
\zeta_{n}:=\frac{1}{\sqrt{n}}\sum_{i=1}^{n}\xi_{i}\stackrel{D}{\rightarrow} N(0,\Sigma).
\end{equation*}
The second assumption is 
\begin{equation*}
b_{k}\le c_{1}\ln\left(\frac{\sigma_k}{\sigma_{k-1}}\right)=c_{1}\ln \sqrt{\frac{k}{k-1}}, \text{ for some constant } c_1>0
\end{equation*}
where $b_k$ appears in the formula of the empirical measures $Q_n=\frac{1}{\gamma_n}\sum_{k=1}^{n}b_{k}\delta_{\zeta_{k}}$ and $\gamma_n:=\sum_{k=1}^{n}b_{k}$. It can be verified (for example the mean value theorem can be used) that  
\begin{equation*}
\frac{1}{n} \leq 2\ln \sqrt{\frac{n}{n-1}},
\end{equation*}
so we can take $b_n=\frac{1}{n}$ and $\gamma_{n}\sim \ln (n)$.\\

The last assumption is that for some $\epsilon>0$ it holds that 
\begin{equation*}
\sup_{k} E(\ln_{+}\ln_{+} ||\zeta_{k}||)^{1+\epsilon}<\infty.
\end{equation*}
It is easy to see that 
\begin{equation*}
E(\ln_{+}\ln_{+} ||\zeta_{k}||)^2 \le E||\zeta_{k}||^2 =\frac{1}{k}\sum_{i=1}^{k}E(\xi_{i1}^2+...+\xi_{ic}^2)\le \frac{1}{k}kc(2c)^2=4c^3,
\end{equation*}
since the vectors $\xi_i$ are independent, have expectation zero and are
bounded ($|\xi_{ik}|\le 2c$ for every $i$ and $k$).  This shows that the
sequences  $(T_n^{(1)},...,T_n^{(c)})$ satisfies the almost sure central limit
theorem with limiting distribution function $G_X$ of some normal random
vector $X$.

It follows for a continuous function  $f:\mathbb{R}^{c}\to \mathbb{R}$  that
\begin{equation*}
\frac{1}{\ln n}\sum_{k=1}^{n}\frac{1}{k}\mathbb I\left( f(\frac{N(k)c}{\sigma_k}T_{k}^{(1)},...,\frac{N(k)c}{\sigma_k}T_{k}^{(c)}) \le \textbf{t}\right) \to G_{f(X)}(\textbf{t}), 
\end{equation*}
where $G_{f(X)}$ denotes the distribution function of $f(X)$.
In particular this applies to $f(\mathbf{x})=\mathbf{x}^{T}\mathbf{x}$.\\

It is left to show that under the null hypothesis the Kruskal-Wallis
statistics is such a function of $T_{n}^{(1)},....,T_{n}^{(c)}$. In this case, for a fixed index $l$ the model is described by:\\
\begin{align*}
H_n(t)&=F(t), &
\hat{H}_{n}(t)&=\frac{1}{N(n)}\sum_{i=1}^{n}\sum_{j=1}^{c}\mathbb I(X_{ij}\le t),\\
F_{n}^{(l)}(t)&=\frac{1}{c}F(t), &
\hat{F}_{n}^{(l)}(t)&=\frac{1}{N(n)}\sum_{i=1}^{n}\mathbb I(X_{il}\le t),
\end{align*}
\begin{equation*}
B_{n}^{(l)}(J)=\frac{1}{N(n)c}\sum_{i=1}^{n}\left(cF(X_{il})-\sum_{j=1}^{c}F(X_{ij})\right),
\end{equation*}
\begin{equation*}
T_{n}^{(l)}=L_{n}^{(l)}-\int{H_{n}(t)dF_{n}^{(l)}}=\frac{1}{N(n)(N(n)+1)}\sum_{i=1}^{n}R_{il}-\frac{1}{2c}.
\end{equation*}
Then, given that the Kruskal-Wallis statistic is defined by
\begin{equation}\label{eq:4.11}
F_{n}^{KW}=\frac{12}{N(n)(N(n)+1)}\frac{1}{n}\sum_{l=1}^{c}R_{l}^{2}-3(N(n)+1), \text{ where } R_{l}=\sum_{i=1}^{n}R_{il},
\end{equation}
it is possible to rewrite it as 
\begin{equation}\label{eq:4.12}
F_{n}^{KW}=\frac{12N(n)(N(n)+1)}{n}\sum_{k=1}^{c}\left(T_{n}^{(k)}\right)^2.
\end{equation}
Notice that under the null hypothesis, the asymptotic distribution of the
Kruskal-Wallis statistic calculated for $c$ dependent samples is not known. Still, an almost sure central limit theorem holds for the Kruskal-Wallis statistics. If (\ref{eq:4.8}) holds, then by the relationships (\ref{eq:4.11}) and (\ref{eq:4.12}) it follows that, 
\begin{equation}\label{eq:4.13}
\frac{1}{\ln n}\sum_{k=1}^{n}\frac{1}{k}\mathbb I\left(\frac{kc^3}{12(kc+1)}F_k^{KW}\le t\right) \rightarrow G_{X^{T}X}(t), 
\end{equation}
where $X\sim \mathcal{N}(0,\Sigma)$ and $t\in \mathbb{R}$.\\
\\
\begin{remark}\label{rem:4.1} {\rm Observe that in the classical case, when we consider random vectors with independent coordinates ($c$ independent samples), the Kruskal-Wallis statistic converges in distribution and an almost sure limit theorem holds under the null hypothesis.
\begin{equation}\label{eq:4.14}
F_{n}^{KW}\stackrel{D}{\longrightarrow} \chi_{c-1}^{2} \Rightarrow \frac{nc^3}{12(nc+1)}F_n^{KW}{\longrightarrow} \frac{c^2}{12}\chi_{c-1}^{2}
\end{equation}
\begin{equation*}
\frac{1}{\ln n}\sum_{k=1}^{n}\frac{1}{k}I\left(\frac{kc^3}{12(kc+1)}F_k^{KW}\le t\right) \rightarrow G_{X^{T}X}(t).
\end{equation*}
Here it is left to show that $X^{T}X\sim \frac{c^2}{12}\chi_{c-1}^{2}$. If we take the expectation in (\ref{eq:4.14}), then
\begin{equation*}
\frac{1}{\ln n}\sum_{k=1}^{n}\frac{1}{k}P\left(\frac{kc^3}{12(kc+1)}F_k^{KW}\le t\right) \rightarrow G_{X^{T}X}(t)
\end{equation*}
and using (\ref{eq:4.13}) it follows $G_{X^{T}X}=G_{\frac{c^2}{12}\chi_{c-1}^{2}}$.}\end{remark}

\begin{remark}\label{rem:4.2} 
{\rm Let us summarize the results that we obtained for the Kruskal-Wallis
  statistic. When considering $c$ dependent samples, the asymptotic
  distribution under the null hypothesis is not known but an almost sure limit
  theorem holds. Note that the limiting distribution of the almost sure
  central limit theorem exists but it cannot be calculated. From an applied
  point of view, the existence of the almost sure central limit theorem allows
  us to use the empirical quantile estimation method described in
  Remark \ref{rem:3.4} for statistical decisions. In the independent
  $c$-sample case, the limiting distribution of the weak convergence and of the
  almost sure central limit theorem exists and it can be calculated under the
  null hypothesis.\\

 The null hypothesis that we test is 
\begin{equation}\label{eq:4.16}
H_{0}: F_1=...=F_c \Rightarrow H_0: \int_{-\infty}^{\infty}F_{j}dF_{k}=\frac{1}{2} \text{ for every } j, k=1,2,...,c
\end{equation}
Using (\ref{eq:4.12}) it is not
hard to see that the Kruskal-Wallis test rejects $H_0$ if and only if 
\begin{equation*}
\frac{nc^3}{12(nc+1)}F_{n}^{KW} >\widehat{t}_{1-\alpha}^{(n)}.
\end{equation*}
This provides an asymptotic $\alpha$-level test for which the power tends to one
under the alternative.}
\end{remark}

\section{Simulation results}\label{sec:4}

In this section we study simulations for the three-sample problem in Section \ref{sec:3}. First, for three independent samples we perform simulation studies to show that the empirical logarithmic quantiles are good approximations of the asymptotic chi-squared quantiles of the Kruskal-Wallis statistic. Second, in case of three dependent samples we investigate by simulations the type I error and the power of the test given in (\ref{eq:4.16}).\\
The computation of the empirical logarithmic quantiles requires the following
adjustments. First, for better results we deleted the first five terms in the
sum of the empirical logarithmic distribution since their contribution is
dominating. Secondly, since the empirical logarithmic distribution for a
general statistic $T_n=T_{n}(X_1,...,X_n)$ is not symmetric and the rejection
or acceptance region might depend on the order of the observations, we considered a number of random permutations of the observations and calculated the quantities of the permuted sequence of independent vectors. Now the empirical logarithmic $\alpha$-quantiles can be computed by  
\begin{equation*}
\widehat{t}_{\alpha}^{(n)}=\frac{\sum_{i=1}^{\text{per}}\widehat{t}_{\alpha}^{*i,(n)}}{\text{per}},
\end{equation*}
where $\text{``per''}$ is the number of permutations that we want to consider and $\widehat{t}_{\alpha}^{*i,(N)}$ is the empirical logarithmic $\alpha$-quantile for permutation $i$ and is given by 
\begin{equation*}
\widehat{t}_{\alpha}^{*i,(n)}=\max\{t| \frac{1}{C_{n}}\sum_{k=1}^{n}\frac{1}{k}I(T_{k}^{*i}<t)\le \alpha\},
\end{equation*}
where $T_{k}^{*i}=T_k(X_{\tau_{i}(1)},...,X_{\tau_{i}(k)})$ and $\tau_{i}$ is the $i$-th permutation of $\{1, 2,..., n\}$. \\
\\
For three independent samples we use the almost sure central limit theorem given in (\ref{eq:4.14}). More precisely, using the method described above we compute the empirical logarithmic $\alpha$-quantile for the statistic $\frac{27n}{12(3n+1)}F_{n}^{KW}$ and compare it to the asymptotic $\alpha$-quantile given by $\frac{9}{12}\chi^{2}(2)$.\\
\\
For this purpose we run 500 simulations, consider 1000 observations in each
sample and permute independently each sample 100 times. We generate random
observations from a normal and an exponential distribution. The numerical
results of our simulation study are presented in Table 1. Note that all
quantiles are approximations to the unknown true distribution of the
statistics.The table indicates in particular, that the LQE-method is
distributionally stable.\\
\begin{table}
\centering
\begin{tabular}{l*{4}{c}r}
distribution/$\alpha$    & 1\% & 5\% & 10\%  \cr
\hline\hline
Normal(2,1)                    & 4.22799 & 3.43943 & 2.88437    \cr
Exponential (3)                & 4.23665 & 3.44352 & 2.88624    \cr
\hline\hline
$\frac{9}{12}\chi^{2}(2)$      & 6.907755 & 4.493598 & 3.453878 
\end{tabular}
\caption{Averaged empirical logarithmic $\alpha$-quantiles for three independent samples}
\end{table}

\noindent Also note that the simulation shows that the logarithmic quantiles
match the asymptotic quantiles sufficiently well. In order to decide which
approximation is better we approximated the true significance level associated to the two test procedures
$$
\frac{27\times 1000}{12(3\times 1000+1)}F_{1000}^{KW} >\widehat{t}_{0.9}^{(1000)}
$$
and
$$
\frac{27\times 1000}{12(3\times 1000+1)}F_{1000}^{KW} >\frac{9}{12}\chi^2(2)
$$
by counting the number of rejections among 500 simulations. It is found that,
for $\alpha=0.9$ (a test with significance level of 10\%), these covering probabilities are 0.108 for the LQE method and 0.12 for the asymptotic quantile method. This
shows that the LQE-method is (in this study) better suited than the classical method. 
\\
We now turn to the second simulation problem. In the case of three dependent samples the goal is to test whether the samples have the same distribution using the rejection rule in (\ref{eq:4.16}). We consider samples from a normal and an exponential distribution and calculate the type I error and the power for different levels of $\alpha$ and different sample sizes. Next we briefly describe the algorithms we used to simulate two dependent samples from normal and exponential distributions and to form three dependent samples we add one more independent sample for simplicity. Note that our simulation study is an indication of what can be expected from a detailed analysis of the performance of the test. \\
To generate two dependent samples from a normal distribution we will simulate dependent bivariate normal random variables with specific parameters and population correlation coefficient $\rho$. We use the algorithm from \cite{GEN}, page 197. We start by generating a matrix $\mathbf{X}_{(n\times 2)}$ of independent standard normal random variables (n independent copies of ($X_1, X_2$), where $X_1$ and $X_2$ are independent standard normal). Then we consider the covariance matrix $\Sigma$ of the vector $(X_1, X_2)$ given by (note that in this case the correlation coefficient $\rho_{X_1X_2}$is equal to $\sigma_{X_1X_2}$)
\begin{equation*}
 \Sigma= 
 \begin{pmatrix}
 1 & \sigma_{X_1X_2} \\
 \sigma_{X_1X_2} & 1
	\end{pmatrix}.
\end{equation*}
Using the Cholesky decomposition we obtain the matrix $\mathbf{T}_{(2\times 2)}$ such that $\mathbf{T}'\mathbf{T}=\Sigma$. Now $\mathbf{Y}=\mathbf{X}\mathbf{T}'$ gives n independent copies of bivariate vectors $(Y_1, Y_2)$, where $Y_1$ and $Y_2$ are dependent standard normal with correlation coefficient $\rho_{X_1X_2}$.\\     
To generate two dependent samples from an exponential distribution, we will simulate dependent bivariate exponential random variables with specific parameters and correlation coefficient. We use the Marshall-Olkin method described in \cite{DEV}, page 585. Start with the generation of three independent uniform random variables $U, V, S$ on $[0, 1]$ and then construct $X_1=\min\{-\frac{\ln U}{\lambda_1}, -\frac{\ln V}{\lambda_{3}}\}$ and $X_2=\min\{-\frac{\ln S}{\lambda_2}, -\frac{\ln V}{\lambda_{3}}\}$. In this way we obtain a bivariate vector $(X_1,X_2)$ with $X_1 \sim$Exp($\lambda_1+\lambda_3$), $X_2 \sim$Exp($\lambda_2+\lambda_3$) and the correlation coefficient is $\rho_{X_1X_2}=\frac{\lambda_3}{\lambda_{1}+\lambda_{2}+\lambda_{3}}$.\\
The results of our simulation studies are given in the tables 2--17. We start with the simulated significance level for different cases. \\ 

\begin{table}
\centering
\begin{tabular}{l||*{5}{c}r}
        & n=30 & n=50 & n=80 & n=100 & n=150 & n=200 \cr
\hline\hline
1\%     & 0 & 0 & 0 &  0 & 0 & 0    \cr
5\%    & 0.015 & 0.03 & 0.03 & 0.02 & 0.025 & 0.025     \cr
10\%   & 0.055 & 0.03 & 0.075 & 0.085  & 0.065 & 0.08       \cr
\end{tabular}
\caption{The level of significance for three $\mathcal{N}(0, 1)$ dependent samples; 200 simulations and 20 permutations, different sample sizes and different values of $\alpha$}
\end{table}

\begin{table}
\centering
\begin{tabular}{l||*{5}{c}r}
        & n=30 & n=50 & n=80 & n=100 & n=150 & n=200 \cr
\hline\hline
1\%     & 0.005 & 0 & 0 &  0 & 0 & 0    \cr
5\%    & 0.02 & 0.02 & 0.04 & 0.035 & 0.01 & 0.025     \cr
10\%   & 0.045 & 0.065 & 0.09 & 0.055  & 0.035 & 0.06      \cr
\end{tabular}
\caption{The level of significance for three Exp(4) dependent samples; 200 simulations and 20 permutations, different sample sizes and different values of $\alpha$}
\end{table}

\begin{table}
\centering
\begin{tabular}{l||*{5}{c}r}
        & n=30 & n=50 & n=80 & n=100 & n=150 & n=200 \cr
\hline\hline
1\%     & 0 & 0 & 0 &  0 & 0 & 0    \cr
5\%    & 0.025 & 0.04 & 0.05 & 0.05 & 0.045 & 0.04     \cr
10\%   & 0.065 & 0.1 & 0.095 & 0.08   & 0.125 & 0.125       \cr
\end{tabular}
\caption{The level of significance for three $\mathcal{N}(2, 1)$ independent samples; 200 simulations and 20 permutations, different sample sizes and different values of $\alpha$}
\end{table}

\begin{table}
\centering
\begin{tabular}{c||cccccc}
        & n=30 & n=50 & n=80 & n=100 & n=150 & n=200 \cr
\hline\hline
1\%     & 0 & 0 & 0 &  0 & 0 & 0    \cr
5\%    & 0.025 & 0.03 & 0.04 & 0.035 & 0.04 & 0.05     \cr
10\%   & 0.07 & 0.085 & 0.08 & 0.08   & 0.105 & 0.105       \cr
\end{tabular}
\caption{The level of significance for three Exp(3) independent samples; 200 simulations and 20 permutations, different sample sizes and different values of $\alpha$}
\end{table}

\noindent Note that the test is conservative and strongly conservative at $1\%$ level. A correction factor of 0.9 and a larger number of simulations will increase some significance levels. Also we noticed that 20 random permutations are seemingly sufficient.\\
We also compute the power of the test for different distributions, sample sizes and different significance levels.   

\begin{table}
\centering
\begin{tabular}{cccc||cccccc}
&$\mu_1$ & $\mu_2$ & $\mu_3$ & n=30 & n=50 & n=80 & n=100 & n=150 & n=200 \cr
\hline\hline	
& 0 & 1 & 0              & 1 & 1 & 1 & 1 & 1 & 1                       \cr
& 1 & 1 & 0              & 0.955 & 1 & 1 & 1 & 1  & 1                   \cr
& 0 & 0.5 & 0            & 0.44 & 0.83 & 0.975 & 0.995 & 1 &1            \cr
& 0 & 0.2 & 0            & 0.075 & 0.12 & 0.215 & 0.295 & 0.45 & 0.59     \cr
\end{tabular}
\caption{Power for three dependent samples from a normal distribution with different means at level $\alpha=10\%$; 200 simulations and 20 permutations}
\end{table}

\begin{table}
\centering
\begin{tabular}{cccc||cccccc}
&$\mu_1$ & $\mu_2$ & $\mu_3$ & n=30 & n=50 & n=80 & n=100 & n=150 & n=200 \cr
\hline\hline	
& 0 & 1 & 0              & 0.95 & 1 & 1 & 1 & 1 & 1                     \cr
& 1 & 1 & 0              & 0.825 & 0.97 & 1 & 1 & 1  & 1                 \cr
& 0 & 0.5 & 0            & 0.165 & 0.55 & 0.895 & 0.95 & 1 &1             \cr
& 0 & 0.2 & 0            & 0.015 & 0.04 & 0.08 & 0.08 & 0.205 & 0.38       \cr
\end{tabular}
\caption{Power for three dependent samples from a normal distribution with different means at level $\alpha=5\%$; 200 simulations and 20 permutations}
\end{table}

\begin{table}
\centering
\begin{tabular}{cccc||cccccc}
&$\mu_1$ & $\mu_2$ & $\mu_3$ & n=30 & n=50 & n=80 & n=100 & n=150 & n=200 \cr
\hline\hline	
& 0 & 1 & 0              & 0.155 & 0.81 & 1 & 1 & 1 & 1                     \cr
& 1 & 1 & 0              & 0.18 & 0.72 & 0.97 & 1 & 1  & 1                   \cr
& 0 & 0.5 & 0            & 0.005 & 0.025 & 0.1 & 0.195 & 0.615 & 0.835        \cr
& 0 & 0.2 & 0            & 0 & 0.005 & 0 & 0 & 0.01 & 0.025                   \cr
\end{tabular}
\caption{Power for three dependent samples from a normal distribution with different means at level $\alpha=1\%$; 200 simulations and 20 permutations}
\end{table}

\begin{table}
\centering
\begin{tabular}{cccc||cccccc}
&$\mu_1$ & $\mu_2$ & $\mu_3$ & n=30 & n=50 & n=80 & n=100 & n=150 & n=200 \cr
\hline\hline	
& 0.25 & 0.2 & 0.25    & 0.055 & 0.165 & 0.245 & 0.25 & 0.435 & 0.595           \cr
& 1 & 1 & 2            & 0.575 & 0.805 & 0.97 & 0.98 & 1  & 1                    \cr
& 0.5 & 1 & 0.5        & 0.625 & 0.89 & 0.995 & 1 & 1 & 1                         \cr
& 1 & 1 & 0.75         & 0.16 & 0.275 & 0.355 & 0.49 & 0.59 &  0.74                \cr
\end{tabular}
\caption{Power for three dependent samples from an exponential distribution with different means at level $\alpha=10\%$; 200 simulations and 20 permutations}
\end{table}

\begin{table}
\centering
\begin{tabular}{cccc||cccccc}
&$\mu_1$ & $\mu_2$ & $\mu_3$ & n=30 & n=50 & n=80 & n=100 & n=150 & n=200 \cr
\hline\hline	
& 0.25 & 0.2 & 0.25    & 0.035 & 0.04 & 0.085 & 0.145 & 0.235 & 0.405           \cr
& 1 & 1 & 2            & 0.355 & 0.665 & 0.915 & 0.955 & 0.995 & 1               \cr
& 0.5 & 1 & 0.5        & 0.38 & 0.715 & 0.985 & 1 & 1 & 1                         \cr
& 1 & 1 & 0.75         & 0.04 & 0.115 & 0.215 & 0.285 & 0.43 & 0.515               \cr
\end{tabular}
\caption{Power for three dependent samples from an exponential distribution with different means at level $\alpha=5\%$; 200 simulations and 20 permutations}
\end{table}

\begin{table}
\centering
\begin{tabular}{cccc||cccccc}
&$\mu_1$ & $\mu_2$ & $\mu_3$ & n=30 & n=50 & n=80 & n=100 & n=150 & n=200 \cr
\hline\hline	
& 0.25 & 0.2 & 0.25    & 0.005 & 0.005 & 0.02 & 0.01 & 0.065 & 0.105         \cr
& 1 & 1 & 2            & 0.24 & 0.4 & 0.34 & 0.36 & 0.655 & 0.945             \cr
& 0.5 & 1 & 0.5        & 0.01 & 0.085 & 0.295 & 0.535 & 0.785 & 0.985          \cr
& 1 & 1 & 0.75         & 0.005 & 0.01 & 0.015 & 0.015 & 0.05 & 0.135                 \cr
\end{tabular}
\caption{Power for three dependent samples from an exponential distribution with different means at level $\alpha=1\%$; 200 simulations and 20 permutations}
\end{table}

\begin{table}
\centering
\begin{tabular}{cccc||cccccc}
&$\mu_1$ & $\mu_2$ & $\mu_3$ & n=30 & n=50 & n=80 & n=100 & n=150 & n=200 \cr
\hline\hline	
& 0 & 1 & 0              & 0.975 & 1 & 1 & 1 & 1 & 1                       \cr
& 1 & 1 & 0              & 0.965 & 1 & 1 & 1 & 1  & 1                   \cr
& 0 & 0.5 & 0            & 0.445 & 0.775 & 0.94 & 0.975 & 1 & 1            \cr
& 0 & 0.2 & 0            & 0.11 & 0.22 & 0.26 & 0.335 & 0.485 & 0.605     \cr
\end{tabular}
\caption{Power for three independent samples from a normal distribution with different means at level $\alpha=10\%$; 200 simulations and 20 permutations}
\end{table}

\begin{table}
\centering
\begin{tabular}{cccc||cccccc}
&$\mu_1$ & $\mu_2$ & $\mu_3$ & n=30 & n=50 & n=80 & n=100 & n=150 & n=200 \cr
\hline\hline	
& 0 & 1 & 0              & 0.905 & 0.99 & 1 & 1 & 1 & 1                     \cr
& 1 & 1 & 0              & 0.885 & 1 & 1 & 1 & 1  & 1                        \cr
& 0 & 0.5 & 0            & 0.265 & 0.51 & 0.805 & 0.885 & 0.975 & 0.995       \cr
& 0 & 0.2 & 0            & 0.06 & 0.09 & 0.145 & 0.25 & 0.325 & 0.445           \cr
\end{tabular}
\caption{Power for three independent samples from a normal distribution with different means at level $\alpha=5\%$; 200 simulations and 20 permutations}
\end{table}

\begin{table}
\centering
\begin{tabular}{cccc||cccccc}
&$\mu_1$ & $\mu_2$ & $\mu_3$ & n=30 & n=50 & n=80 & n=100 & n=150 & n=200 \cr
\hline\hline	
& 0 & 1 & 0              & 0.185 & 0.71 & 0.905 & 0.935 & 1 & 1                 \cr
& 1 & 1 & 0              & 0.12 & 0.62 & 0.945 & 0.97 & 1 & 1                    \cr
& 0 & 0.5 & 0            & 0.005 & 0.045 & 0.145 & 0.19 & 0.58 & 0.76             \cr
& 0 & 0.2 & 0            & 0 & 0.005 & 0.005 & 0 & 0.005 & 0.14                    \cr
\end{tabular}
\caption{Power for three independent samples from a normal distribution with different means at level $\alpha=1\%$; 200 simulations and 20 permutations}
\end{table}

\begin{table}
\centering
\begin{tabular}{cccc||cccccc}
&$\mu_1$ & $\mu_2$ & $\mu_3$ & n=30 & n=50 & n=80 & n=100 & n=150 & n=200 \cr
\hline\hline	
& 0.25 & 0.2 & 0.25    & 0.095 & 0.25 & 0.305 & 0.385 & 0.46 & 0.635           \cr
& 1 & 1 & 2            & 0.6 & 0.91 & 0.985 & 0.99  & 1  & 1                    \cr
& 0.5 & 1 & 0.5        & 0.595 & 0.845 & 0.99 & 0.995 & 1 & 1                    \cr
& 1 & 1 & 0.75         & 0.22 & 0.245 & 0.37 & 0.45 & 0.69 &  0.755               \cr
\end{tabular}
\caption{Power for three independent samples from an exponential distribution with different means at level $\alpha=10\%$; 200 simulations and 20 permutations}
\end{table}

\begin{table}
\centering
\begin{tabular}{cccc||cccccc}
&$\mu_1$ & $\mu_2$ & $\mu_3$ & n=30 & n=50 & n=80 & n=100 & n=150 & n=200 \cr
\hline\hline	
& 0.25 & 0.2 & 0.25    & 0.045 & 0.1 & 0.165 & 0.195 & 0.335 & 0.44           \cr
& 1 & 1 & 2            & 0.475 & 0.715 & 0.93 & 0.96 & 0.995 & 1               \cr
& 0.5 & 1 & 0.5        & 0.39 & 0.695 & 0.93 & 0.99 & 1 & 1                         \cr
& 1 & 1 & 0.75         & 0.065 & 0.145 & 0.27 & 0.28 & 0.59 & 0.605               \cr
\end{tabular}
\caption{Power for three independent samples from an exponential distribution with different means at level $\alpha=5\%$; 200 simulations and 20 permutations}
\end{table}

\begin{table}
\centering
\begin{tabular}{cccc||cccccc}
&$\mu_1$ & $\mu_2$ & $\mu_3$ & n=30 & n=50 & n=80 & n=100 & n=150 & n=200 \cr
\hline\hline	
& 0.25 & 0.2 & 0.25    & 0 & 0.005 & 0.02 & 0.03 & 0.12 & 0.15         \cr
& 1 & 1 & 2            & 0.015 & 0.12 & 0.32 & 0.425 & 0.77 & 0.925             \cr
& 0.5 & 1 & 0.5        & 0.005 & 0.07 & 0.315 & 0.34 & 0.75 & 0.93          \cr
& 1 & 1 & 0.75         & 0 & 0.005 & 0.005 & 0.03 & 0.055 & 0.15                 \cr
\end{tabular}
\caption{Power for three independent samples from an exponential distribution with different means at level $\alpha=1\%$; 200 simulations and 20 permutations}
\end{table}

\noindent Note that the statistical power is close to $0$ when the difference between the means is negligible and the sample size is small. We can also observe that it increases when the sample size and the significance level increase.

\section{Auxiliary results on almost sure convergence for rank statistics}\label{sec:5}
In this section we collect the results needed for the proof of
Theorem \ref{theo:3.1}. We use the notations and the assumptions that were
introduced in Section \ref{sec:2}. The proofs in this section rely on the following facts:\\
1) The Borel-Cantelli lemma.\\
2) The estimation of variances of sums is done by estimating covariances of summands uniformly if they do not vanish, multiplied by the number of non-vanishing covariances.\\
We shall use these without further mentioning.\\ 
\begin{remark}\label{rem:6.1} {\rm We define 
\begin{eqnarray*}
\phi_{iluv}^{j}(s,t) &=&\lambda_{il}(J'(H_{j}(s))\mathbb I(t\leq s)-J'(H_{j}(s))F_{uv}(s)-\\
&-&\int J'(H_{j}(x))\mathbb I(t\leq x)dF_{il}(x)+\int J'(H_{j}(x))F_{uv}(x)dF_{il}(x)).
\end{eqnarray*}
We list two properties of the functions $\phi_{ijkl}^{n}$ that are used all over in the proofs of the following lemmas,
\begin{eqnarray*}
&&E(\phi_{iluv}^{j}(X_{il},X_{uv}))=0 \text{ if } i\neq u\\
&&E(\phi_{iluv}^{j}(X_{il}, X_{uv})\phi_{i'l'u'v'}^{j}(X_{i'l'}, X_{u'v'}))=0,
\end{eqnarray*}
if one of the indices $i, i', u, u'$  is different from the others, where $X_{il}$ is the $l$-th component of the vector $\mathbf{X}_{i}$.}\end{remark}
\begin{lemma}\label{lem:6.2}
\begin{equation*}
\frac{N(n)}{\sigma_{n}(J)}C_{1}(n)\longrightarrow 0  \text{ a.s. when }  n\longrightarrow\infty.
\end{equation*}
\end{lemma}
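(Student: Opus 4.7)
The plan is to prove an $L^2$-estimate on $C_1(n)$ and apply Borel--Cantelli on the subsequence $n_k$ prescribed by assumption (b), then interpolate between consecutive $n_k$ using (\ref{eq:3.16}) and (a).

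First, expanding $\widehat H_n - H_n$ and $d(\widehat F_n - F_n)$ in their defining sums and interchanging summation with integration writes $C_1(n)$ as a degenerate bilinear form in the centered indicators:
$$
C_1(n) \;=\; \frac{1}{N(n)^2}\sum_{i,l,u,v}\phi_{iluv}^n(X_{uv}, X_{il}),
$$
where $\phi_{iluv}^n$ is the kernel from Remark \ref{rem:6.1}, whose centering properties are then available. The next step is to bound $E(C_1(n)^2)$. Squaring produces an 8-fold sum over $(i,l,u,v,i',l',u',v')$; by the independence of the vectors $\mathbf X_i$ combined with the two centering identities of Remark \ref{rem:6.1}, a summand can be nonzero only when the multiset $\{i,u,i',u'\}$ overlaps enough that no factor remains centered against an independent companion. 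A careful count shows only $O\bigl(n\,(\max_{i\le n} m_i)^4\bigr)$ summands survive, and each is bounded by a universal constant, using $|\lambda_{uv}^{(n)}|\le 1$ from (\ref{eq:3.2}) together with the boundedness of $J'$ on $[0,1]$ (which follows from the boundedness of $J''$ and the domain $(0,1)$ of $J$). This yields
$$
E\bigl(C_1(n)^2\bigr)\;\le\; K\,\frac{n\,(\max_{i\le n} m_i)^4}{N(n)^4},
$$
so that $\bigl(N(n)/\sigma_n(J)\bigr)^2 E(C_1(n)^2) = O\!\left(\bigl(\max_{i\le n} m_i/\sigma_n(J)\bigr)^2\right)$ after using $N(n)\ge n$.

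Chebyshev and Borel--Cantelli, combined with the summability in (\ref{eq:3.14}), then give
$$
\frac{N(n_k)}{\sigma_{n_k}(J)}C_1(n_k)\longrightarrow 0\qquad\text{a.s.}
$$
For $n_k\le n<n_{k+1}$, the bound $\max_{i\le n}m_i\le K\max_{i\le n_k}m_i$ from (\ref{eq:3.16}), the growth assumption (a) applied to control the ratio $\sigma_{n_k}(J)/\sigma_n(J)$ from below, and the fact that $N(n_{k+1})/N(n_k)$ stays uniformly bounded (since $N(n_k-1)<k^2$ and $N(n_{k+1}-1)<(k+1)^2$, with $\max m_i$ controlled), show that the same $L^2$-estimate survives at all intermediate $n$, so the full sequence converges a.s.\ to zero.

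The main obstacle is the combinatorial bookkeeping in the variance bound: one must handle the ``diagonal'' contributions (e.g.\ $i=u$) where the centering in Remark \ref{rem:6.1} is not directly available, and avoid losing extra powers of $n$ or of $\max_i m_i$. The rescuing observation is that whenever one factor fails to be centered in isolation, the \emph{other} factor in the product is still centered and its independence from the first forces the overall expectation to vanish; this is precisely what the index-counting above captures, and it is the reason assumption (\ref{eq:3.14}) is stated in exactly the form $\sum_k (\max_{i\le n_k}m_i/\sigma_{n_k}(J))^2<\infty$.
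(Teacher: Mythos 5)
Your overall strategy --- an $L^2$ bound on $C_1(n)$, Chebyshev plus Borel--Cantelli along the subsequence $n_k$, then an interpolation between consecutive $n_k$ --- is exactly the paper's, and the first half is essentially correct. Two points, however, need repair.

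First, a bookkeeping error in the variance count. Your ``rescuing observation'' fails for the diagonal--diagonal cross terms $i=u\neq i'=u'$: there \emph{both} factors are functions of a single vector ($\mathbf X_i$ resp.\ $\mathbf X_{i'}$) and neither is centered (the first identity in Remark \ref{rem:6.1} only applies when $i\neq u$, since $X_{il}$ and $X_{iv}$ may be dependent), so $E(\phi_{iliv})E(\phi_{i'l'i'v'})$ need not vanish. The surviving summands therefore number $O\bigl((\sum_{i\le n}m_i^2)^2\bigr)$, i.e.\ $O(n^2(\max_i m_i)^4)$, not $O(n(\max_i m_i)^4)$. The desired conclusion $\bigl(N(n)/\sigma_n(J)\bigr)^2E(C_1(n)^2)=O\bigl((\max_{i\le n}m_i/\sigma_n(J))^2\bigr)$ still holds, but only via the sharper inequality $\sum_i m_i^2\le(\max_i m_i)N(n)$; your stated count combined with ``$N(n)\ge n$'' does not deliver it (one would need $N(n)^2\ge n(\max_i m_i)^2$, which is not implied by the assumptions).

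Second, and more seriously, the interpolation step is a genuine gap. Knowing that ``the same $L^2$-estimate survives at all intermediate $n$'' proves nothing: Borel--Cantelli over the whole sequence would require $\sum_{j}(\max_{i\le j}m_i/\sigma_j(J))^2<\infty$, whereas (\ref{eq:3.14}) only gives summability along the sparse subsequence $n_k$ --- each block $[n_k,n_{k+1})$ contains up to $n_{k+1}-n_k$ indices, so the full sum can diverge. What is actually needed (and what the paper does) is to bound the \emph{increment} $\frac{N(j)}{\sigma_j(J)}C_1(j)-\frac{N(n_k)}{\sigma_{n_k}(J)}C_1(n_k)$ for $n_k\le j<n_{k+1}$. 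Its pieces carry an extra smallness factor of order $\bigl((N(j)-N(n_k))/N(j)\bigr)^2\lesssim((k+1)^2-k^2)^2/k^4\asymp k^{-2}$, because only the newly added observations (and the change of the argument $H_j$ versus $H_{n_k}$ inside $J'$, which affects even the summands with indices $\le n_k$) contribute; it is this factor that makes the double sum $\sum_k\sum_{n_k\le j<n_{k+1}}P(|\cdot|\ge\epsilon)$ converge and closes the Borel--Cantelli argument. Without some version of this increment (or a maximal-inequality) argument, the passage from the subsequence to the full sequence is unjustified.
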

\begin{proof} We use the estimate of the second moment of $\frac{N(n)}{\sigma_{n}(J)}C_{1}(n)$ from \cite{BR1} 
\begin{equation}\label{eq:6.1}
E\left(\frac{N(n)}{\sigma_{n}(J)}C_{1}(n)\right)^{2}=O\left(\frac{N(n)^{2}|| J'||_{\infty}^{2}}{\sigma_{n}^{2}(J)N(n)^{4}}\left(\sum_{i=1}^{n}m_{i}^{2}\right)^{2}\right)=
O\left(\left(\frac{\max_{1\leq i\leq n}m_{i}}{\sigma_{n}(J)}\right)^{2}\right).
\end{equation} 
For the subsequence $n_{k}$ defined in Theorem \ref{theo:3.1}, for every $\epsilon >0$ and by Chebychev's inequality and relation (\ref{eq:6.1}) it follows 
\begin{equation}\label{eq:6.2}
P\left(\frac{N(n_{k})}{\sigma_{n_{k}}(J)}\left|C_{1}(n_{k})\right|>\epsilon\right) \le \frac{1}{\epsilon^2}E\left(\frac{N(n_{k})}{\sigma_{n_{k}}(J)}C_{1}(n_{k})\right)^2\le
\frac{1}{\epsilon^2}\left(\frac{\max_{1\leq i\leq n_k}m_{i}}{\sigma_{n_k}(J)}\right)^2.  
\end{equation}
Using assumption (\ref{eq:3.14}) and relation (\ref{eq:6.2}) it follows that 
\begin{equation*}
\sum_{k=1}^{\infty} P\left(\frac{N(n_{k})}{\sigma_{n_{k}}(J)}\left|C_{1}(n_{k})\right| >\epsilon \right)\le \sum_{k=1}^{\infty}\frac{1}{\epsilon^2}\left(\frac{\max_{1\leq i\leq n_k}m_{i}}{\sigma_{n_k}(J)}\right)^2 < \infty.
\end{equation*}
and by the Borel-Cantelli lemma, 
\begin{equation*}
\frac{N(n_{k})}{\sigma_{n_{k}}(J)}C_{1}(n_{k})\longrightarrow 0 \text{
a.s. when $k$ }  \rightarrow \infty.
\end{equation*}
So far we proved that Lemma \ref{lem:6.2} holds for the subsequence $n_{k}$. In order to prove that it holds for the whole sequence it is necessary to show that what happens between the subsequence points does not influence the convergence. Thus, let $j\in \mathbb{R}$ such that $n_{k}\le j < n_{k+1}$. The goal is to show that $\frac{N(j)}{\sigma_{j}(J)}C_1(j)$ converges to zero a.s. when $j\rightarrow \infty$. Notice that $C_{1}(j)$ defined in (\ref{eq:3.18}) can be decomposed as follows 
\begin{eqnarray*}
C_{1}(j)
&=&\frac{1}{N(j)^2}\sum_{i=1}^{j}\sum_{l=1}^{m_{i}}\sum_{u=1}^{j}\sum_{v=1}^{m_{u}}\lambda_{il}(J'(H_{j}(X_{il}))\mathbb
I(X_{uv} \leq X_{il})- \\
& &-J'(H_{j}(X_{il}))F_{uv}(X_{il})-\int J'(H_{j}(t))\mathbb I(X_{uv}\leq t)dF_{il}(t) \\
& &+\int J'(H_{j}(t))F_{uv}(t)dF_{il}(t)).
\end{eqnarray*}
Thus,
\begin{equation*}
C_{1}(j)=\frac{1}{N(j)^2}\sum_{i=1}^{j}\sum_{l=1}^{m_{i}}\sum_{u=1}^{j}\sum_{v=1}^{m_{u}}\phi_{iluv}^{j}(X_{il}, X_{uv})=\frac{1}{N(j)^2}\Phi_{1,1}^{j,j}(j) ,
\end{equation*}
where we defined  
\begin{eqnarray*}
\phi_{iluv}^{j}(s,t) &=&\lambda_{il}(J'(H_{j}(s))\mathbb I(t\leq s)-J'(H_{j}(s))F_{uv}(s)-\\
&-&\int J'(H_{j}(x))\mathbb I(t\leq x)dF_{il}(x)+\int J'(H_{j}(x))F_{uv}(x)dF_{il}(x))
\end{eqnarray*}
and
\begin{equation*}
\Phi_{a,c}^{b,d}(j)=\sum_{i=a}^{b}\sum_{l=1}^{m_i}\sum_{u=c}^{d}\sum_{v=1}^{m_u}\phi_{iluv}^{j}(X_{il}, X_{uv}), \text{ for }a,b,c,d \in \mathbb{N}, a<b, c<d.
\end{equation*}
The term $\frac{N(j)}{\sigma_{j}(J)}C_1(j)$ can be estimated as
\begin{equation*}
\left|\frac{N(j)}{\sigma_{j}(J)}C_1(j)\right|\leq \left|\frac{N(j)}{\sigma_{j}(J)}C_1(j)-\frac{N(n_{k})}{\sigma_{n_{k}}(J)}C_1(n_{k})\right|+
\left|\frac{N(n_{k})}{\sigma_{n_{k}}(J)}C_1(n_{k})\right|.
\end{equation*}
Since we showed that $\frac{N(n_{k})}{\sigma_{n_{k}}(J)}C_{1}(n_{k})\longrightarrow 0$ a.s. when $k\to \infty$, it is left to prove that
\begin{equation*}
\left|\frac{N(j)}{\sigma_{j}(J)}C_1(j)-\frac{N(n_{k})}{\sigma_{n_{k}}(J)}C_1(n_{k})\right| \longrightarrow 0 \text{ a.s. when }j\to \infty.
\end{equation*} 
Now
\begin{eqnarray*}
&&\frac{N(j)}{\sigma_{j}(J)}C_1(j)-\frac{N(n_{k})}{\sigma_{n_{k}}(J)}C_1(n_{k})=\frac{1}{\sigma_{j}(J)N(j)}\Phi_{1,1}^{j,j}(j)-\frac{1}{\sigma_{n_{k}}(J)N(n_{k})}
\Phi_{1,1}^{n_k,n_k}(n_k)=\\
&&=\frac{1}{\sigma_{j}(J)N(j)}\Phi_{1,1}^{n_k,n_k}(j)-\frac{1}{\sigma_{n_{k}}(J)N(n_{k})}\Phi_{1,1}^{n_k,n_k}(n_k)
+\frac{1}{\sigma_{j}(J)N(j)}\Phi_{1,n_k+1}^{n_k,j}(j)+\\
&&+\frac{1}{\sigma_{j}(J)N(j)}\Phi_{n_k+1,1}^{j,n_k}(j)
+\frac{1}{\sigma_{j}(J)N(j)}\Phi_{n_k+1,n_k+1}^{j,j}(j).
\end{eqnarray*}
Let
\begin{equation*}
\text{A}(j)=\frac{1}{\sigma_{j}(J)N(j)}\Phi_{1,1}^{n_k,n_k}(j)
-\frac{1}{\sigma_{n_{k}}(J)N(n_{k})}\Phi_{1,1}^{n_k,n_k}(n_k)=A_1(j)+A_2(j),
\end{equation*}
where
\begin{eqnarray*}
A_{1}(j)&=&\frac{1}{\sigma_{j}(J)N(j)}
\left(\Phi_{1,1}^{n_k,n_k}(j)-\Phi_{1,1}^{n_k,n_k}(n_k)\right),\\
A_{2}(j)&=&\left(\frac{1}{\sigma_{j}(J)N(j)}-\frac{1}{\sigma_{n_{k}}(J)N(n_{k})}\right)\Phi_{1,1}^{n_k,n_k}(n_k).
\end{eqnarray*}
We will show that $A_1(j)$ and $A_2(j)$ converge to zero a.s. using the Borel-Cantelli lemma. By Remark \ref{rem:6.1} we estimate the second moment of $A_{1}(j)$ as
\begin{equation*}
E(A_{1}^2(j))=O\left(\frac{||J'||^2}{\sigma^2_{j}(J)N(j)^2}\left(\frac{N(j)-N(n_{k})}{N(j)}\right)^2\left((\max_{1\leq i\leq n_{k}}m_{i})N(n_k)\right)^2\right).
\end{equation*}
Now, for some constant $C>0$,  
\begin{eqnarray*}
&&\sum_{k=1}^{\infty}\sum_{n_{k}\le j< n_{k+1}}P(|A_{1}(j)|\geq\epsilon)\leq\\
&&\sum_{k=1}^{\infty}\sum_{n_{k}\le j< n_{k+1}}\frac{1}{\epsilon^2}\frac{||J'||^2}{\sigma^2_{j}(J)N(j)^2}
\left((\max_{1\leq i\leq n_{k}}m_{i})N(n_k)\right)^2\left(\frac{N(j)-N(n_{k})}{N(j)}\right)^2\le\\
&&\le C\frac{||J'||^2}{\epsilon^2}\sum_{k=1}^{\infty}\left(\frac{\max_{1\leq i\leq n_{k}}m_{i}}{\sigma_{n_k}(J)}\right)^2< \infty,
\end{eqnarray*}
which follows from assumption (\ref{eq:3.14}) and the facts that 
$N(n_k)\le N(j)< N(n_{k+1}), \sigma_{n_{k}}(J)\le \sigma_{j}(J)< \sigma_{n_{k+1}}(J), N(n_{k+1})-N(n_k)\ge n_{k+1}-n_k$ and $k^{2}\le N(n_k)<(k+1)^2$.
Thus $A_1(j)\longrightarrow 0$ a.s. when $j\to \infty$.\\
For $A_2(j)$ notice that 
\begin{equation*}
|A_2(j)|=\left|\frac{\sigma_{n_{k}}(J)N(n_{k})}{\sigma_{j}(J)N(j)}-1 \right|\left|\frac{N(n_{k})}{\sigma_{n_{k}}(J)}C_{1}(n_{k})\right|\rightarrow 0 \text{ a.s. when }j\to \infty. 
\end{equation*} 
Let $B(j)=\frac{1}{\sigma_{j}(J)N(j)}\Phi_{1,n_k+1}^{n_k,j}(j)$. We will show that $B(j)\longrightarrow 0$ a.s. when $j\to \infty$ using Borel-Cantelli. An estimate for the second moment of $B(j)$ is obtained using the Remark \ref{rem:6.1} and assumption (\ref{eq:3.16})
\begin{eqnarray*}
&&E(B(j))^2 \leq\frac{16||J'||^2}{\sigma_{j}^{2}(J)N(j)^2}\left(\sum_{\lambda=n_{k}+1}^{j}m_{\lambda}^2\right)\left(\sum_{i=1}^{n_{k}}m_{i}^2\right)\\
&&\le K\frac{16||J'||^2}{\sigma_{j}^{2}(J)N(j)^2}(N(j)-N(n_{k}))(\max_{1\leq i \leq n_{k}}m_{i})^2 N(n_{k}).
\end{eqnarray*}
Now,
\begin{eqnarray*}
&&\sum_{k=1}^{\infty}\sum_{n_{k}\le j< n_{k+1}}P(|B(j)|\geq\epsilon)\le\\
&&\le \frac{16||J'||^2}{\epsilon^2}K\sum_{k=1}^{\infty}\left(\frac{\max_{1\leq i \leq n_{k}}m_{i}}{\sigma_{n_{k}}(J)}\right)^2
\frac{(N(n_{k+1})-N(n_{k}))^2}{N(n_{k})}\le\\
&&\le\frac{32^2||J'||^2}{\epsilon^2} K
\sum_{k=1}^{\infty}\left(\frac{\max_{1\leq i \leq n_{k}}m_{i}}{\sigma_{n_{k}}(J)}\right)^2<\infty,
\end{eqnarray*}
so $B(j)$ converges to zero a.s. when $j\to \infty$.\\
Let $C(j)=\frac{1}{\sigma_{j}(J)N(j)}\Phi_{n_k+1,1}^{j,n_k}(j)$. It can be
shown that $C(j)$ converges to zero a.s. when $j \to \infty$ in the same way
as for $B(j)$.\\  

Let $D(j)=\frac{1}{\sigma_{j}(J)N(j)}\Phi_{n_k+1,n_k+1}^{j,j}(j)$. Using the same techniques it can be shown that $D(j)$ converges to zero a.s. An estimate of the second moment of $D(j)$ is expressed as 
\begin{equation*}
E(D(j))^2=O\left(\frac{||J'||^2}{\sigma_{j}^{2}(J)N(j)^2}\left(\sum_{i=n_{k}+1}^{j}m_{i}^{2}\right)^2\right).
\end{equation*}
By the Borel-Cantelli lemma and the assumption (\ref{eq:3.16}) it follows that 
\begin{eqnarray*}
&&\sum_{k=1}^{\infty}\sum_{n_{k}\le j< n_{k+1}}P(|D(j)|\geq\epsilon)\leq \\
&&\leq\frac{||J'||^2}{\epsilon^2}\sum_{k=1}^{\infty}\sum_{n_{k}\le j< n_{k+1}}
\left(\frac{\max_{1\leq i \leq n_{k+1}}m_{i}}{\sigma_{n_{k}}(J)}\right)^2
\frac{\left(N(j)-N(n_{k})\right)^2\sigma_{n_{k}}^{2}(J)}{\sigma_{j}^{2}(J)N(j)^2}\le\\
&&\le \frac{||J'||^2}{\epsilon^2}\sum_{k=1}^{\infty}
\left(\frac{\max_{1\leq i \leq n_{k+1}}m_{i}}{\sigma_{n_{k}}(J)}\right)^2
\frac{((k+2)^2-k^2)^3}{k^4}<\infty,
\end{eqnarray*}
thus $D(j)$ converges to zero a.s. when $j \to \infty$.
\end{proof}

\begin{lemma}\label{lem:6.3}
\begin{equation*}
\frac{N(n)}{\sigma_{n}(J)}C_{2}(n)\longrightarrow 0 \text{ a.s. when }  n\longrightarrow\infty.
\end{equation*}
\end{lemma}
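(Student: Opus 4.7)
The plan is to establish Lemma \ref{lem:6.3} by a direct deterministic pointwise bound, needing no Borel--Cantelli or subsequence argument. The crucial observation is that the explicit factor $1/(N(n)+1)$ built into the definition (\ref{eq:3.19}) of $C_2(n)$ already makes the entire integral of order $O(1/N(n))$ almost surely; this is one order of magnitude smaller than the $O(\sqrt{N(n)})$ scaling on which $\sigma_n(J)/N(n)$ is expected to sit, so a trivial sup-norm estimate will suffice.

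Concretely, I would argue pointwise in $\omega$. The factor $J'(H_n(\cdot))$ is uniformly bounded: from $\|J''\|_\infty<\infty$ and $J'(x)=J'(x_0)+\int_{x_0}^x J''(t)\,dt$ for a fixed $x_0\in(0,1)$, one has $\|J'\|_\infty\le |J'(x_0)|+\|J''\|_\infty$. Moreover $0\le \widehat{H}_n\le 1$ since $\widehat H_n$ is an empirical distribution function, and the signed measure
$$
d\widehat{F}_n=\frac{1}{N(n)}\sum_{i=1}^n\sum_{j=1}^{m_i}\lambda_{ij}^{(n)}\,\delta_{X_{ij}}
$$
has total variation $\frac{1}{N(n)}\sum_{i,j}|\lambda_{ij}^{(n)}|\le 1$ by (\ref{eq:3.2}). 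Combining these three estimates,
$$
|C_2(n)|\le \frac{1}{N(n)+1}\,\|J'\|_\infty\,\|\widehat{H}_n\|_\infty\,\mathrm{TV}(\widehat F_n)\le \frac{\|J'\|_\infty}{N(n)+1},
$$
so that
$$
\left|\frac{N(n)}{\sigma_n(J)}C_2(n)\right|\le \frac{\|J'\|_\infty}{\sigma_n(J)}.
$$

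It remains only to verify that $\sigma_n(J)\to\infty$. From the convergence of the series in (\ref{eq:3.14}) and the elementary bound $\max_{1\le i\le n_k}m_i\ge 1$, the summands must tend to zero, hence $\sigma_{n_k}(J)\to\infty$. For arbitrary $n$, choose $k$ with $n_k\le n<n_{k+1}$; assumption (\ref{eq:3.13}) then gives $\sigma_n(J)/\sigma_{n_k}(J)\ge M(n/n_k)^\gamma\ge M$, whence $\sigma_n(J)\ge M\sigma_{n_k}(J)\to\infty$ along the full sequence, completing the proof.

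There is essentially no obstacle in this argument: the whole point is that, unlike $C_1(n)$ in Lemma \ref{lem:6.2} which required variance estimates and a Borel--Cantelli passage to the subsequence $(n_k)$ followed by an interpolation step, $C_2(n)$ carries a built-in normalizing factor $1/(N(n)+1)$ that reduces the claim to a uniform deterministic bound. The only thing one has to be careful about is not to waste effort on probabilistic machinery, and to record at the end that $\sigma_n(J)\to\infty$ under assumptions (a) and (b).
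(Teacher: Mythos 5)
Your proof is correct and takes essentially the same route as the paper: a deterministic, pointwise sup-norm bound on the integral using $\|J'\|_\infty<\infty$, $0\le\widehat H_n\le 1$, and the total-variation bound on $\widehat F_n$ from (\ref{eq:3.2}), yielding $\bigl|\tfrac{N(n)}{\sigma_n(J)}C_2(n)\bigr|\le \|J'\|_\infty/\sigma_n(J)$, with no Borel--Cantelli or subsequence argument needed. Your explicit verification that $\sigma_n(J)\to\infty$ via (\ref{eq:3.14}) and (\ref{eq:3.13}) is a detail the paper leaves implicit.
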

\begin{proof}
Recall that $C_{2}(n)$ was defined in (\ref{eq:3.19}). Then using the assumption (\ref{eq:3.14}) it follows that 
\begin{eqnarray*}
&&\left|\frac{N(n)}{\sigma_{n}(J)}C_{2}(n)\right|=\\
&&\left|\frac{1}{N(n)(N(n)+1)\sigma_{n}(J)}\sum_{i=1}^{n}\sum_{j=1}^{m_{i}}\sum_{k=1}^{n}\sum_{l=1}^{m_{k}}
J'(H_{n}(X_{ij}))\mathbb I(X_{kl}\leq X_{ij})\right|\le\\
&&\le||J'||\frac{N(n)}{(N(n)+1)\sigma_{n}(J)}\longrightarrow 0 \text{ a.s. when } n\longrightarrow\infty.
\end{eqnarray*}
\end{proof}
\begin{lemma}\label{lem:6.4}
\begin{equation*}
D_{n}=\sup_{t\in \mathbb{R}}\left|\frac{1}{N(n)}\sum_{k=1}^{n}\sum_{l=1}^{m_{k}}\left(\mathbb
I(X_{kl}\leq t)-F_{kl}(t)\right)\right|=O(a_n) \text{ a.s. }
\end{equation*}
\end{lemma}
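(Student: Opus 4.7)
The plan is to treat $D_n$ as the uniform error between the empirical averaged distribution and its mean, reduce the supremum to a maximum over a finite deterministic grid by exploiting monotonicity, and then use a concentration inequality together with the Borel--Cantelli lemma to obtain the almost sure rate.

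First I would construct, for each $n$, a grid $-\infty = t_0 < t_1 < \dots < t_{M_n} = \infty$ such that $H_n(t_{i+1}) - H_n(t_i) \le 1/M_n$; continuity of each $F_{ij}$ makes $H_n$ continuous, so such a grid always exists. By the monotonicity of both $\widehat H_n$ and $H_n$, the standard sandwich argument gives
$$D_n \;\le\; \max_{0 \le i \le M_n}\left|\frac{1}{N(n)}\sum_{k=1}^n\sum_{l=1}^{m_k}\bigl(\mathbb I(X_{kl}\le t_i) - F_{kl}(t_i)\bigr)\right| \;+\; \frac{1}{M_n}.$$

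Next, for each fixed grid point $t$, I would write $S_n(t) = \sum_{k=1}^n Y_k(t)$ with $Y_k(t) = \sum_{l=1}^{m_k}\bigl(\mathbb I(X_{kl}\le t) - F_{kl}(t)\bigr)$. Because the vectors $\mathbf X_k$ are independent, the summands $Y_k(t)$ are independent, centered, and satisfy $|Y_k(t)|\le m_k$; the (possibly complicated) dependence among the coordinates inside one vector is absorbed harmlessly into a single bounded summand. Hoeffding's inequality then yields
$$P\bigl(|S_n(t)| \ge \eta\bigr) \;\le\; 2\exp\!\left(-\frac{\eta^2}{2\sum_{k=1}^n m_k^2}\right),$$
and a union bound over the $M_n+1$ grid points controls $\max_i |S_n(t_i)|$.

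Finally I would pick $M_n$ polynomial in $n$ so that the discretization error $1/M_n$ is negligible compared to the target rate $a_n$, and choose $\eta = c\,N(n)\,a_n$ so that $c^2 N(n)^2 a_n^2/\sum_{k=1}^n m_k^2$ grows faster than $\log n$. The resulting tail probabilities are summable, and Borel--Cantelli gives $D_n = O(a_n)$ almost surely. The main obstacle is matching the rate produced by this argument to the specific sequence $a_n$ used elsewhere in the paper: it must be at least of order $N(n)^{-1}\bigl(\sum_{k=1}^n m_k^2 \cdot \log n\bigr)^{1/2}$ for the Hoeffding exponent to dominate the union-bound logarithmic cost, and it must also dominate the polynomial discretization error $1/M_n$. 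Once $a_n$ satisfies these mild constraints the argument closes essentially automatically; no further structural assumption on the dependence inside the vectors $\mathbf X_k$ is required.
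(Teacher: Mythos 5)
Your argument is correct, and it arrives at exactly the rate the paper needs, namely $a_n \asymp N(n)^{-1}\bigl(\sum_{k=1}^n m_k^2\,\log n\bigr)^{1/2}$; but the route differs from the paper's in that you prove the concentration result from scratch, whereas the paper simply recasts $D_n$ as a weighted empirical process $\sup_t\bigl|\sum_{k=1}^n w_k\,(Y_k(t)-F_k(t))\bigr|$ with $Y_k(t)=m_k^{-1}\sum_{l}\mathbb I(X_{kl}\le t)$ and weights $w_k=m_k/N(n)$, and then invokes Singh's (1975) lemma and theorem as black boxes to get the exponential tail bound and the $O(a_n)$ rate, finally choosing $a_n=b_nN(n)^{-1}\bigl(\sum_k m_k^2\bigr)^{1/2}$ with $b_n\sim c\sqrt{\log n}$. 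Your discretization-plus-Hoeffding-plus-Borel--Cantelli argument is essentially the proof that underlies Singh's result, so the two approaches are mathematically the same mechanism; what your version buys is self-containedness and, more substantively, it makes explicit the one point the paper's citation glosses over: Singh's theorem is stated for weighted sums of single indicators of independent real random variables, while here each summand is a block $\sum_{l=1}^{m_k}\mathbb I(X_{kl}\le t)$ whose coordinates may be dependent, so one must check that the argument only uses independence across blocks together with boundedness and monotonicity in $t$ --- exactly the observation you make when you note that the within-vector dependence is ``absorbed harmlessly into a single bounded summand.'' What the paper's version buys is brevity and a slightly sharper constant in the exponent (Hoeffding on the range $[0,m_k]$ rather than $[-m_k,m_k]$), neither of which matters for the conclusion. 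The only loose end in your write-up is the verification that the discretization error $1/M_n$ is dominated by $a_n$ for a polynomial choice of $M_n$; this follows from $\sum_k m_k^2\ge N(n)^2/n$ by Cauchy--Schwarz, which gives $a_n\gtrsim\sqrt{\log n/n}$, so $M_n=n$ suffices.
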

\begin{proof} We shall use Singh's theorem and lemma (\cite{SIN}). Define 
\begin{equation}\label{eq:6.7}
Y_{k}(t)=\frac{1}{m_k}\sum_{l=1}^{m_{k}}\mathbb I(X_{kl}\leq t),
\end{equation}
\begin{equation}\label{eq:6.8}
F_{k}(t)=EY_{k}(t)=\frac{1}{m_k}\sum_{l=1}^{m_{k}}F_{kl}(t).
\end{equation}
Using (\ref{eq:6.7}) and (\ref{eq:6.8}), $D_n$ can be expressed as
\begin{equation*}
D_n=\sup_{t\in \mathbb{R}}\left|\sum_{k=1}^{n}w_k \left(Y_k(t)-F_k(t) \right)\right|,
\end{equation*}
where the weights are $w_k=\frac{m_k}{N(n)}$.\\

Using Singh's lemma (\cite{SIN}), for $a_n\ge \frac{1}{N(n)}\sqrt{\sum_{k=1}^{n} m_{k}^{2}}$ the following inequality holds 
\begin{equation*}
P(D_{n}\ge a_n)<\frac{4a_nN(n)^2}{\sum_{k=1}^{n} m_{k}^{2}}\exp\left\{-2\left(\frac{a_{n}^2N(n)^2}{\sum_{k=1}^{n} m_{k}^{2}}-1\right)\right\}.
\end{equation*}
Using Singh's theorem (\cite{SIN}), for any sequence $a_n\ge \frac{1}{N(n)}\sqrt{\sum_{k=1}^{n} m_{k}^{2}}$ such that $\sum_{n=1}^{\infty}\left\{\frac{a_nN(n)^2}
{\sum_{k=1}^{n}
m_{k}^{2}}\exp \left(-2\left(\frac{a_{n}^2N(n)^2}{\sum_{k=1}^{n}
m_{k}^{2}}\right)\right)\right\}<\infty$, it follows that
\begin{equation*}
D_{n}=O(a_n) \quad \text{ with probability 1}.
\end{equation*}
Take $a_n=b_{n}\frac{1}{N(n)}\sqrt{\sum m_{k}^{2}}$ with $b_n \sim c\sqrt{\log n}$ and $\sum_{n=1}^{\infty}\frac{b_n N(n)}{\sqrt{\sum_{k=1}^{n} m_{k}^{2}}}\exp(-2b_n^2) <\infty$. 
\end{proof}
\begin{lemma}\label{lem:6.5}:
\begin{equation*}
\frac{N(n)}{\sigma_{n}(J)}C_{3}(n)\longrightarrow 0 \text{ a.s. when }  n\longrightarrow\infty.
\end{equation*}
\end{lemma}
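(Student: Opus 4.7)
The plan is to control $|C_3(n)|$ by a purely deterministic pointwise bound on its integrand, exploiting the boundedness of $J''$ postulated in Theorem \ref{theo:3.1} together with the Glivenko--Cantelli-type estimate of Lemma \ref{lem:6.4}. Since $|J''(\theta(H_n))| \le \|J''\|_\infty$ and the signed measure $d\widehat{F}_n$ has total variation bounded by $\max_{i,j}|\lambda_{ij}^{(n)}| = 1$ by (\ref{eq:3.2}), the defining formula (\ref{eq:3.20}) gives
\begin{equation*}
|C_3(n)| \le \tfrac{1}{2}\|J''\|_\infty \sup_{t\in\mathbb{R}}\left(\tfrac{N(n)}{N(n)+1}\widehat{H}_n(t) - H_n(t)\right)^{2}.
\end{equation*}
I would then write $\tfrac{N(n)}{N(n)+1}\widehat{H}_n - H_n = (\widehat{H}_n - H_n) - (N(n)+1)^{-1}\widehat{H}_n$, and since $0 \le \widehat{H}_n \le 1$, the supremum in the display above is at most $\sup_t|\widehat{H}_n - H_n| + (N(n)+1)^{-1}$.

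Next I would invoke Lemma \ref{lem:6.4}, which supplies $\sup_t|\widehat{H}_n - H_n| = O(a_n)$ almost surely with $a_n = b_n\sqrt{\sum_{k=1}^n m_k^2}/N(n)$ and $b_n \sim c\sqrt{\log n}$. Using $\sum_{k=1}^n m_k^2 \le (\max_{i\le n} m_i)\, N(n)$, this translates into
\begin{equation*}
\sup_t\left|\tfrac{N(n)}{N(n)+1}\widehat{H}_n(t) - H_n(t)\right|^{2} = O\!\left(\frac{(\log n)\max_{i\le n} m_i}{N(n)}\right) \text{ a.s.,}
\end{equation*}
and consequently
\begin{equation*}
\frac{N(n)}{\sigma_n(J)}|C_3(n)| = O\!\left(\frac{(\log n)\max_{i\le n} m_i}{\sigma_n(J)}\right) \text{ a.s.}
\end{equation*}

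Along the subsequence $n_k$, the right-hand side is precisely the quantity appearing in assumption (\ref{eq:3.15}) and therefore tends to $0$. To pass from the subsequence to the full sequence I would proceed exactly as in the proof of Lemma \ref{lem:6.2}: for $n_k \le j < n_{k+1}$, assumption (\ref{eq:3.16}) yields $\max_{i\le j}m_i \le K\max_{i\le n_k}m_i$; assumption (\ref{eq:3.13}) gives $\sigma_j(J) \ge M\sigma_{n_k}(J)$; and since each $m_i \ge 1$, the definition of $n_k$ forces $n_{k+1} \le (k+1)^2$, so that $\log n_{k+1}$ and $\log n_k$ remain comparable as $k\to\infty$. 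Combining these three comparisons dominates the expression at any intermediate $j$ by a constant multiple of its value at $n_k$, and the lemma follows.

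The main (mild) obstacle is the subsequence-to-sequence interpolation, which however is routine and already mirrors the corresponding step in Lemma \ref{lem:6.2}; the only genuinely analytic ingredient here is the uniform rate supplied by Lemma \ref{lem:6.4}, and no Borel--Cantelli argument is needed beyond the one already embedded in that lemma.
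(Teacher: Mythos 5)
Your proof is correct and follows essentially the same route as the paper's: bound the integrand of $C_3(n)$ by $\|J''\|_\infty \sup_t\bigl(\tfrac{N(n)}{N(n)+1}\widehat H_n - H_n\bigr)^2$ (the paper does this termwise over the $N(n)$ summands of the integral, which amounts to the same total-variation bound), split off the $O((N(n)+1)^{-1})$ correction, invoke Lemma \ref{lem:6.4} with $a_n^2 = O\bigl((\log n)\max_{i\le n}m_i/N(n)\bigr)$, and conclude from (\ref{eq:3.15}). You are in fact more explicit than the paper about interpolating from the subsequence $n_k$ to the full sequence (the paper simply cites (\ref{eq:3.15}) for all $n$); just note that comparability of $\log n_{k+1}$ and $\log n_k$ needs a lower bound on $n_k$ in addition to $n_{k+1}\le (k+1)^2$, not the upper bound alone.
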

\begin{proof}
$C_{3}(n)$ defined in (\ref{eq:3.20}) can be written as 
\begin{eqnarray*}
C_{3}(n) &=&\frac{1}{2N(n)}\sum_{i=1}^{n}\sum_{j=1}^{m_{i}}\lambda_{ij}J''(\theta(H_{n}(X_{ij}))) \times\\
& \times &\left(\frac{1}{N(n)+1}\sum_{k=1}^{n}\sum_{l=1}^{m_{k}}\mathbb I(X_{kl}\leq X_{ij}) -\frac{1}{N(n)}\sum_{k=1}^{n}\sum_{l=1}^{m_{k}}F_{kl}(X_{ij})\right)^2.
\end{eqnarray*}
Thus,
\begin{eqnarray*}
&&\left|\frac{N(n)}{\sigma_{n}(J)}C_{3}(n)\right|= \frac{1}{2\sigma_{n}(J)}|\sum_{i=1}^{n}\sum_{j=1}^{m_{i}}\lambda_{ij}J''(\theta(H(X_{ij})))
(\frac{1}{N(n)+1}\sum_{k=1}^{n}\sum_{l=1}^{m_{k}}\mathbb \mathbb I(X_{kl}\leq X_{ij})-\\
&&-\frac{1}{N(n)}\sum_{k=1}^{n}\sum_{l=1}^{m_{k}}F_{kl}(X_{ij}))^2|\leq \frac{1}{2\sigma_{n}(J)}||J''||\sum_{i=1}^{n}\sum_{j=1}^{m_{i}}
(\frac{1}{N(n)+1}\sum_{k=1}^{n}\sum_{l=1}^{m_{k}}\mathbb I(X_{kl}\leq X_{ij})-\\
&&-\frac{1}{N(n)}\sum_{k=1}^{n}\sum_{l=1}^{m_{k}}F_{kl}(X_{ij}))^2\le \frac{1}{2\sigma_{n}(J)}||J''||\sum_{i=1}^{n}\sum_{j=1}^{m_{i}}
2(\frac{1}{(N(n)+1)^2}+\\
&&+(\frac{1}{N(n)}\sum_{k=1}^{n}\sum_{l=1}^{m_{k}}(\mathbb I(X_{kl}\leq X_{ij})-F_{kl}(X_{ij})))^2=\frac{1}{\sigma_{n}(J)}||J''||\frac{N(n)}{(N(n)+1)^2}+\\
&&+\frac{1}{\sigma_{n}(J)}||J''||\sum_{i=1}^{n}\sum_{j=1}^{m_{i}}(\frac{1}{N(n)}\sum_{k=1}^{n}\sum_{l=1}^{m_{k}}(\mathbb
I(X_{kl}\leq X_{ij})-F_{kl}(X_{ij})))^2.
\end{eqnarray*}
Using Lemma \ref{lem:6.4} it follows that
\begin{eqnarray*}
&&\left|\frac{N(n)}{\sigma_{n}(J)}C_{3}(n)\right|\le \frac{1}{\sigma_{n}(J)}||J''||\frac{N(n)}{(N(n)+1)^2}+\frac{1}{\sigma_{n}(J)}||J''||\sum_{i=1}^{n}\sum_{j=1}^{m_{i}}D_{n}^{2}\le\\
&&\le \frac{1}{\sigma_{n}(J)}||J''||\frac{N(n)}{(N(n)+1)^2}+\frac{1}{\sigma_{n}(J)}||J''||N(n)\frac{b_n^2\max m_k}{N(n)}\longrightarrow 0,
\end{eqnarray*}
since $\frac{b_n^2\max m_k}{\sigma_n(J)}\longrightarrow 0$ by (\ref{eq:3.15}) and choosing $b_n$ of order $\log n$.
\end{proof}

\begin{lemma}\label{lem:6.6}
\begin{equation*}
\lim_{n\to \infty}\frac{1}{\ln n}\sum_{k=1}^{n}\frac{1}{k}\mathbb I(\frac{N(k)}{\sigma_{k}(J)}B_{k}(J)\le t)=\Phi(t).
\end{equation*}
\end{lemma}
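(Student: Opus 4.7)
The plan is to write $N(n)B_n(J)$ as a sum of independent, centered, uniformly bounded contributions and apply the general almost sure central limit theorem (Lifshits' Theorem 3.1 in \cite{LI2}) in the same manner as in Section \ref{sec:3} for the Kruskal--Wallis vector. Expanding (\ref{eq:3.10}) using (\ref{eq:3.5})--(\ref{eq:3.6}) yields
\begin{equation*}
N(n)B_n(J)=\sum_{i=1}^{n}\xi_{n,i},
\end{equation*}
with
\begin{equation*}
\xi_{n,i}=\sum_{j=1}^{m_i}\lambda_{ij}^{(n)}\Bigl[J(H_n(X_{ij}))-\int J(H_n)\,dF_{ij}\Bigr]+\sum_{j=1}^{m_i}\int J'(H_n(x))\bigl(\mathbb I(X_{ij}\le x)-F_{ij}(x)\bigr)\,dF_n(x).
\end{equation*}
Since each $\xi_{n,i}$ depends only on $\mathbf X_i$, the summands are independent across $i$; they are centered by construction and bounded in absolute value by $Cm_i$ for a constant $C$ depending only on $\|J\|_\infty$ and $\|J'\|_\infty$ on $[0,1]$ (finite because $J''$ is bounded, $H_n,F_n$ take values in $[0,1]$, and $|\lambda_{ij}^{(n)}|\le 1$).

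The weak convergence $\frac{N(n)B_n(J)}{\sigma_n(J)}\Rightarrow N(0,1)$ is Theorem~3.1 in \cite{BR1}. With the CLT in hand, I would next verify the remaining hypotheses of Lifshits' theorem with weights $b_k=1/k$ and $\gamma_n\sim\ln n$. The weight condition $b_k\le c_1\ln(\sigma_k(J)/\sigma_{k-1}(J))$ follows from (\ref{eq:3.13}) applied to $m=k$, $n=k-1$: indeed $\sigma_k(J)/\sigma_{k-1}(J)\ge M(k/(k-1))^\gamma$, so $\ln(\sigma_k(J)/\sigma_{k-1}(J))$ is bounded below by a constant multiple of $1/k$. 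The integrability condition $\sup_k E(\log_+\log_+|\frac{N(k)B_k(J)}{\sigma_k(J)}|)^{1+\varepsilon}<\infty$ is immediate from the $L^2$-normalization $E(\frac{N(k)B_k(J)}{\sigma_k(J)})^2=1$.

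The main technical point is that the decomposition above is a triangular array, since $\xi_{n,i}$ genuinely depends on $n$ through $H_n$, $F_n$ and $\lambda_{ij}^{(n)}$, so Lifshits' theorem does not apply verbatim. My plan to handle this is to mimic Lemma \ref{lem:6.2}: first establish the ASCLT along the subsequence $n_k=\min\{j:N(j)\ge k^2\}$ by second-moment estimates combined with Borel--Cantelli, then fill in $n_k\le j<n_{k+1}$ using (\ref{eq:3.14})--(\ref{eq:3.16}) to bound the intermediate fluctuations. The key ingredient, and the only place where a nontrivial calculation is required, is a covariance estimate of the form
\begin{equation*}
\operatorname{Cov}\Bigl(\tfrac{N(n)B_n(J)}{\sigma_n(J)},\tfrac{N(m)B_m(J)}{\sigma_m(J)}\Bigr)\le C\left(\tfrac{n}{m}\right)^{\gamma},\quad m\ge n,
\end{equation*}
which follows because the independence of the $\mathbf X_i$ collapses the double sum in $i$ to its diagonal (only $i\le n$ contributes), while (\ref{eq:3.13}) controls the ratio $\sigma_n(J)/\sigma_m(J)$. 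This covariance bound is precisely what drives the Berkes--Csáki/Lifshits machinery to produce the ASCLT for the full sequence, at which point Lemma 2.2 in \cite{FRI} can be invoked in the main theorem to transfer the result to $T_n(J)$.
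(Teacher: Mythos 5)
Your decomposition of $N(n)B_n(J)$ into independent, centered, bounded summands $\xi_{n,i}$ depending only on $\mathbf X_i$ is exactly the paper's (the paper obtains it by swapping the index pairs $(i,j)$ and $(k,l)$ in the second integral, which is what your $dF_n$, carrying the $\lambda_{kl}^{(n)}$ weights, encodes). Where you diverge is in how the almost sure CLT is then extracted, and here your proposal has a gap. The paper does not use Lifshits' theorem at all for this lemma: it invokes Theorem~1 of Berkes--Cs\'aki \cite{BER}, which is stated for arbitrary measurable functionals $f_l(X_1,\dots,X_l)$ of the independent sequence, so the triangular-array dependence of $\xi_{n,i}$ on $n$ (through $H_n$, $F_n$, $\lambda^{(n)}$) that worries you is absorbed automatically. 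The only hypothesis to check is the approximation condition
\begin{equation*}
E\left(\left|\frac{1}{\sigma_l(J)}\sum_{i=1}^{k}\xi_i\right|\wedge 1\right)\le \frac{1}{M}\left(\frac{k}{l}\right)^{\gamma},
\end{equation*}
which follows from Cauchy--Schwarz and (\ref{eq:3.13}); together with the CLT of Theorem~3.1 in \cite{BR1} this finishes the proof in one step.

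Your proposed repair --- establish the statement along $n_k$ by Chebyshev and Borel--Cantelli and interpolate using (\ref{eq:3.14})--(\ref{eq:3.16}) --- is the mechanism of Lemma~\ref{lem:6.2}, and it proves almost sure convergence \emph{to zero} of remainder terms; it does not produce a Gaussian limit for the logarithmic averages. To convert your covariance bound $\operatorname{Cov}(\zeta_n,\zeta_m)\le C(n/m)^{\gamma}$ into an ASCLT one must bound the variance of $\frac{1}{\ln n}\sum_k \frac1k\bigl(g(\zeta_k)-Eg(\zeta_k)\bigr)$ for Lipschitz $g$ and only then pass to indicators; you assert that "the machinery" does this but supply no argument, and for a triangular array the off-diagonal terms $E[(\sum_{i\le n}\xi_{n,i})(\sum_{i\le n}\xi_{m,i})]$ involve two \emph{different} arrays, so they are not controlled by $\sigma_n^2$ without an additional estimate. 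A secondary flaw: your verification of the Lifshits weight condition $b_k\le c_1\ln(\sigma_k(J)/\sigma_{k-1}(J))$ does not follow from (\ref{eq:3.13}), since with $M<1$ that inequality does not even force $\sigma_k(J)\ge\sigma_{k-1}(J)$, so the logarithm may be negative. Both problems disappear if you replace Lifshits' theorem by Berkes--Cs\'aki Theorem~1 as the paper does.
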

\begin{proof}
The term $B_{n}(J)$ defined in (\ref{eq:3.10}) can be expanded as 
\begin{eqnarray}
B_{n}(J)&=&\frac{1}{N(n)}\sum_{i=1}^{n}\sum_{j=1}^{m_{i}}\lambda_{ij}(J(H_{n}(X_{ij}))-\int J(H_{n}(x))dF_{ij}(x)+\notag\\
&+&\int J'(H_{n}(x))\widehat{H}_{n}(x)dF_{ij}(x)-\int J'(H_{n}(x))H_{n}(x)dF_{ij}(x))\label{eq:6.12}.
\end{eqnarray}
If we let
\begin{eqnarray*}
\sum_{i=1}^{n}\alpha_{i}&=&\sum_{i=1}^{n}\sum_{j=1}^{m_{i}}\lambda_{ij}\left(J(H_{n}(X_{ij}))-\int
J(H_{n}(x))dF_{ij}(x)\right)= \\
&=&\sum_{i=1}^{n}\sum_{j=1}^{m_{i}}\lambda_{ij}\left(J(H_{n}(X_{ij}))-E(J(H_{n}(X_{ij})))\right)
\end{eqnarray*}
and
\begin{eqnarray*}
\sum_{i=1}^{n}\beta_{i}&=&\sum_{i=1}^{n}\sum_{j=1}^{m_{i}}\lambda_{ij}\left(\int
J'(H_{n}(x))\widehat{H}_{n}(x)dF_{ij}(x)-\int
J'(H_{n}(x))H_{n}(x)dF_{ij}(x)\right)=\\
&=&\sum_{i=1}^{n}\sum_{j=1}^{m_{i}}\sum_{k=1}^{n}\sum_{l=1}^{m_{k}}\frac{1}{N(n)}\lambda_{ij}\int
J'(H_{n}(x))\left(I(X_{kl}\leq x)-F_{kl}(x)\right)dF_{ij}(x)= \\
&=&\sum_{k=1}^{n}\sum_{l=1}^{m_{k}}\sum_{i=1}^{n}\sum_{j=1}^{m_{i}}\frac{1}{N(n)}\lambda_{ij}\int
J'(H_{n}(x))\left(\mathbb I(X_{kl}\leq x)-F_{kl}(x)\right)dF_{ij}(x)=\\
&=&\sum_{i=1}^{n}\sum_{j=1}^{m_{i}}\sum_{k=1}^{n}\sum_{l=1}^{m_{k}}\frac{1}{N(n)}\lambda_{kl}\int
J'(H_{n}(x))\left(\mathbb I(X_{ij}\leq x)-F_{ij}(x)\right)dF_{kl}(x)= \\
&=&\sum_{i=1}^{n}\left(\sum_{j=1}^{m_{i}}\frac{1}{N(n)}\sum_{k=1}^{n}\sum_{l=1}^{m_{k}}\lambda_{kl}\int
J'(H_{n}(x))\left(\mathbb I(X_{ij}\leq x)-F_{ij}(x)\right)dF_{kl}(x)\right) 
\end{eqnarray*}
and
\begin{eqnarray*}
\xi_{i}&=&\sum_{j=1}^{m_{i}}(\lambda_{ij}\left(J(H_{n}(X_{ij}))-\int
J(H_{n}(x))dF_{ij}(x)\right)+ \\
&+&\frac{1}{N(n)}\sum_{k=1}^{n}\sum_{l=1}^{m_{k}}\lambda_{kl}\int
J'(H_{n}(x))\left(\mathbb I(X_{ij}\leq x)-F_{ij}(x)\right)dF_{kl}(x)) 
\end{eqnarray*}
then the relationship (\ref{eq:6.12}) can be rewritten as 
\begin{equation*}
N(n)B_{n}(J)=\sum_{i=1}^{n}\alpha_{i}+\sum_{i=1}^{n}\beta_{i}=\sum_{i=1}^{n}\xi_{i},
\end{equation*}
and
\begin{equation}\label{eq:6.16}
\frac{N(n)B_{n}(J)}{\sigma_n(J)}=\frac{1}{\sigma_n(J)}\sum_{i=1}^{n}\xi_{i}.
\end{equation}
The proof of the lemma follows from Theorem 1 of \cite{BER}. This result is an almost sure central limit theorem for an independent sequence of random variables. Since in (\ref{eq:6.16}) $\frac{N(n)B_{n}(J)}{\sigma_{n}(J)}$ is expressed as the partial sum of the independent random variables $\xi_i$, it is left to check the assumptions in their theorem. First, we need to have a convergence in distribution, which is given by Theorem 3.1 in \cite{BR1}
\begin{equation*}
\frac{N(n)}{\sigma_{n}(J)}B_{n}(J)=\frac{1}{\sigma_{n}(J)}\sum_{i=1}^{n}\xi_{i} \stackrel{D}{\rightarrow} N(0, 1). 
\end{equation*}
In Theorem 1 of \cite{BER} we put $f_{l}(x_1,..,x_l)=\frac{1}{\sigma_{l}(J)}\sum_{i=1}^{l}x_i$ and $f_{k,l}(x_1,..,x_{l-k})=\frac{1}{\sigma_{l}(J)}\sum_{i=1}^{l-k}x_i$ where $1\le k\le l$ and put $c_l=l^{\gamma}$ where $\gamma$ is as in (\ref{eq:3.13}). By Cauchy-Schwarz and using (\ref{eq:3.11}) we conclude that 
\begin{equation*}
E\left(\left|\frac{1}{\sigma_{l}(J)}\sum_{i=1}^{k}\xi_i \right|\wedge 1 \right)\le \frac{1}{M}\left(\frac{k}{l}\right)^{\gamma}.
\end{equation*}
Thus the theorem applies and we conclude that 
\begin{equation*}
\lim_{n\to\infty}\frac{1}{\ln n}\sum_{k\leq
n}\frac{1}{k}\mathbb I(\frac{1}{\sigma_{k}(J)}\sum_{i=1}^{k}\xi_{i}<x)=\Phi(x) \text{
a.s. for any }\ x\in \mathbb{R}.
\end{equation*}
\end{proof}


\begin{thebibliography}{999}
\bibitem{AK1} Akritas M. G. and  Arnold, S. F. (1994),
  ``Fully nonparametric hypotheses for factorial designs I: multivariate
  repeated measures designs,'' {\itshape Journal of the American Statistical
    Association}, 89, 336--343.

\bibitem{AK2} Akritas, M. G., Arnold, S. F. and
  Brunner, E. (1997), ``Nonparametric hypotheses and rank statistics for unbalanced
  factorial designs,'' {\itshape Journal of the American Statistical Association}, 92, 258--265. 

\bibitem{BAB} Babu,G.J. and Padmanabhan, A.R. (2002),
  ``Re-sampling methods for the nonparametric Behrens-Fisher problem,''
  {\itshape Sankhy$\overline{\rm a}$: The Indian Journal of Statistics, Series A},
  64, 678--692.

\bibitem{BRO} Brosamler, G. A. (1988), ``An almost everywhere
  central limit theorem,'' {\itshape Mathematical Proceedings of the Cambridge Philosophical Society}, 104, 561--574.

\bibitem{BER} Berkes, I., Csaki, E. (2001), ``A universal
  result in almost sure central limit theory,'' {\itshape Stochastic Processes and their Applications}, 94, 105--134.

\bibitem{BED} Berkes, I. and  Dehling, H. (1993), ``Some
  limit theorems in log density,'' {\itshape The Annals of Probability}, 21, 1640--1670.

\bibitem{BR1} Brunner, E. and  Denker, M. (1994), ``Rank
  statistics under dependent observations and applications to factorial
  designs,'' {\itshape Journal of Statistical Planning and Inference}, 42, 353--378.

\bibitem{BR4} Brunner, E. and Munzel,U. (2000), ``The
  nonparametric Behrens-Fisher problem: Asymptotic theory and
  a small-sample approximation,'' {\itshape Biometrical Journal}, 42, 17--23.

\bibitem{BR2} Brunner, E. and  Puri, M. L. (1996),
  ``Nonparametric methods in design and analysis of experiments,'' in {\it  Handbook of Statistics} Vol. 13, Elsevier Science B. V., Amsterdam, 631--703. 

\bibitem{BR3} Brunner, E. and Puri, M. L. (2002), ``A class
  of rank-score tests in factorial designs,'' {\itshape Journal of Statistical
      Planning and Inference}, 103, 331--360.

\bibitem{CHU} Chuprunov, A. and  Fazekas, I. (2004),
  ``Almost sure limit theorems for the Pearson statistic,'' {\itshape Theory Probab. Appl.}, 48, 14--147.

\bibitem{DEF} Denker, M. and Fridline, M. (2010), ``The
  almost sure version of Cramer's theorem,''in {\it  Dependence in probability, analysis and number theory}, Kendrick Press, Heber City, UT, 195--201.

\bibitem{DEV} Devroye, L. (1986), ``Non-uniform random variate generation,'' Springer-Verlag.

\bibitem{EFR} Efron, B. (1979), ``Bootstrap methods: another look at the
  jackknife,'' {\itshape The Annals of Statistics}, 7, 1--26.

\bibitem{FRI} Fridline, M. (2009), ``Almost sure confidence intervals
  for the correlation coefficient,'' {\it Ph.D. thesis}, Case Western Reserve University, 2009.

\bibitem{GEN} Gentle, J. (2003), ``Random number generation and Monte Carlo methods,'' Springer.

\bibitem{HOL} Holzmann, H., Koch, S. and  Min, A. (2004),
  ``Almost sure limit theorems for U-statistics,'' {\itshape Statistics and Probability Letters}, 69, 261--269.

\bibitem{IBR} Ibragimov, I. A. and Lifshits, M. (1999),
  ``On almost sure limit theorems,'' {\itshape Theory Probab. Appl.}, 44, 254--272. 

\bibitem{LAC} Lacey, M. T., Philipp, W. (1990), ``A note on
  the almost sure central limit theorem,'' {\itshape Statistics and Probability Letters}, 9, 201--205.

\bibitem{LI1} Lifshits, M. A. (2002), ``The almost sure limit theorem
  for sums of random vectors,'' {\itshape Journal of Mathematical Sciences}, 109.

\bibitem{LI2} Lifshits, M. A. (2001), ``Lecture notes on almost sure limit theorems,'' Publications IRMA, Lille, 54, No. 8, 1--23.

\bibitem{MAT} Mathai, A. M. and Provost, S. B. (1992), ``Quadratic forms in random variables: theory and applications,'' Marcel Dekker, Inc.  

\bibitem{MUN} Munzel, U. (1999), ``Linear rank score statistics when
  ties are present,'' {\itshape Statistics and Probability Letters}, 41, 389--395.

\bibitem{PEL} Peligrad, M. and Shao, Q. M. (1995), ``A note
  on the almost sure central limit theorem for weakly dependent random
  variables,'' {\itshape Statistics and Probability Letters}, 22, 131--136. 

\bibitem{THA} Thangavelu, K. (2005), ``Quantile estimation based on
  the almost sure central limit theorem,'' {\it Ph.D. thesis}, G\"{o}ttingen University.

\bibitem{REI} Reiczigel, J., Zakari\`as, I. and
  R\'ozsa, L. (2005), ``A bootstrap test of stochastic equality of two populations,''
  {\itshape The American Statistician}, 59, 156--161.

\bibitem{SCH} Schatte, P. (1988), ``On strong versions of the central
  limit theorem,'' {\itshape Mathematische Nachrichten}, 137, 249--256.

\bibitem{SIN} Singh, R. S. (1975), ``On the Glivenko-Cantelli theorem
  for weighted empiricals based on independent random variables,'' {\itshape The Annals of Probability}, 3, 371--374.

\bibitem{STE} Steland, A. (1998), ``Bootstrapping rank statistics,''
  {\itshape Metrika},  47, 251–-264. 
\end{thebibliography}
\end{document}